\documentclass[a4paper,UKenglish,cleveref,autoref]{lipics-v2021}
\hideLIPIcs
\nolinenumbers
\usepackage{booktabs,tabularx,array}
\usepackage{tikz,graphicx}
\usetikzlibrary{arrows.meta,positioning}
\newcommand{\Npos}{\mathbb N_{>0}}
\newcommand{\dist}{\operatorname{dist}}
\newcommand{\pc}{\operatorname{pc}}

\newcommand{\NPC}{\textnormal{NP-complete}}
\newcommand{\LT}{\textnormal{Linear time}}
\newcolumntype{Y}{>{\raggedright\arraybackslash}X}

\IfFileExists{plainurl.bst}{\bibliographystyle{plainurl}}{\bibliographystyle{plain}}

\title{The Complexity of Hamiltonian Cycles and Paths with Prescribed Walk Lengths}
  \titlerunning{Hamiltonian Cycles and Paths with Prescribed Walk Lengths}

\author{Hitendra Kumar}{Indian Institute of Technology Dharwad, India}{hitendra.kumar1729@gmail.com}{}{}
\author{Mathieu Mari}{LIRMM, University of Montpellier, Montpellier, France. \and \url{https://www.lirmm.fr/mathieu-mari/index.html}}{mathieu.mari@lirmm.fr}{https://orcid.org/0000-0001-8074-0241}{}
\author{R. B. Sandeep}{Indian Institute of Technology Dharwad, India \and \url{https://sites.google.com/site/homepagesandeeprb}}{sandeeprb@iitdh.ac.in}{https://orcid.org/0000-0003-4383-1819}{R. B. Sandeep thanks the IWOCA~2026 organizing team and gratefully
acknowledges financial support from LIRMM, Montpellier, France,
through the IWOCA~2026 Extended Stay Support Scheme for his research
visit to LIRMM.}

  \authorrunning{H. Kumar, M. Mari, R.B. Sandeep}
  \Copyright{Anonymous}
  \ccsdesc[500]{Theory of computation~Problems, reductions and completeness}
  \ccsdesc[500]{Mathematics of computing~Graph algorithms}
  \keywords{Hamiltonian cycle, Hamiltonian path, walk length, graph powers,
    complexity dichotomy, directed graphs}

\begin{document}
\maketitle

\begin{abstract}
Fix a set $S$ of positive integers. An $S$-Hamiltonian cycle is a cyclic
ordering of all vertices of a graph such that each consecutive pair is
joined by a walk whose length belongs to $S$. We study the complexity
of deciding whether an input graph admits such an ordering, together
with the corresponding path variants with prescribed or unrestricted
endpoints. We complete the complexity classification for undirected
graphs. In particular, we prove NP-completeness for every finite
nonempty set of odd lengths, resolving an open problem posed by
Amarilli, Lombardo, and Monet (WG 2026). 
For digraphs, all three problems are NP-complete for every
fixed finite nonempty $S$ and admit linear-time algorithms when $S$
is cofinite. For other fixed infinite sets $S$, we establish
NP-complete and undecidable cases, with undecidability occurring
even when $S$ is recursive.
\end{abstract}
\newpage

\section{Introduction}\label{sec:introduction}

The \emph{Hamiltonian Cycle} and \emph{Hamiltonian Path} problems ask
whether a given graph contains a cycle or a path, respectively, that
visits every vertex exactly once. These are fundamental problems in
graph theory and theoretical computer science~\cite{Gould2003},
closely connected to routing and sequencing applications through the
travelling salesman problem~\cite{ApplegateEtAl2006}. As classical
NP-complete problems, they also play a central role in the study of
computational complexity and the design of
algorithms~\cite{Karp1972,GareyJohnson1979}. In both problems,
consecutive vertices must be adjacent. A natural generalisation,
studied by Amarilli, Lombardo, and Monet~\cite{ALM}, relaxes this
requirement by allowing consecutive vertices to be joined by a walk
whose length belongs to a fixed set of positive integers. Every vertex
must still appear exactly once in the cyclic or linear ordering, but
the walks witnessing these transitions may repeat vertices and pass
through vertices appearing elsewhere in the ordering.

Formally, fix $S\subseteq\Npos$. An \emph{$S$-Hamiltonian cycle} is a
cyclic ordering of all vertices in which every consecutive pair is joined
by a walk of length in $S$. We also consider \emph{$S$-Hamiltonian paths},
both with prescribed endpoints and with unrestricted endpoints. For a
digraph, every witnessing walk must respect the arc directions. Throughout
the paper, $S$ is fixed rather than supplied as part of the input. Crucially,
these are conditions on \emph{walk lengths}, not on shortest-path distances.
For example, an edge of an undirected graph can be traversed back and
forth to give a walk of any positive odd length between its endpoints.

More generally, a walk of positive length $t$ can be extended to length
$t+2i$ for every integer $i\geq0$ by inserting back-and-forth traversals
of an edge. Hence the undirected problems for $S$ are identical to those
for its \emph{normalisation}, the parity-preserving downward closure
\[
 \operatorname{cl}_2(S)=
 \{r\in\Npos:\exists\ell\in S,\ r\leq\ell,\ r\equiv\ell\pmod2\}.
\]
We call $S$ \emph{normalised} when $S=\operatorname{cl}_2(S)$.
Such a set is a union of an odd prefix and an even prefix: write
$O_k=\{1,3,\ldots,k\}$ for odd $k$, $E_h=\{2,4,\ldots,h\}$ for even $h$,
and $O_\infty,E_\infty$ for the entire positive odd and even classes.
Either prefix may also be empty.

\paragraph*{Context and previous work.}
Amarilli, Lombardo, and Monet introduced the $S$-Hamiltonian variant of
Hamiltonicity and studied its complexity on undirected graphs~\cite{ALM,ALMfull}.
For normalised sets, their cycle classification gives NP-completeness
for $\{1\}$, $\{1,2\}$, and $\{2\}$, and NP-hardness under Cook reductions
for $\{2,4\}$. The remaining classified cases are linear-time solvable:
even prefixes containing $\{2,4,6\}$, mixed sets containing
$\{1,2,3\}$ or $\{1,2,4\}$, and $O_\infty$ (with the empty set trivial).
The first two classical cases follow from ordinary Hamiltonicity and
graph squares~\cite{Karp1972,Underground}; the tractability of sets
containing $\{1,2,3\}$ follows from the Hamiltonian-connectedness of
graph cubes~\cite{Karaganis}.

The finite odd-only prefixes $O_k$ with $k\geq3$ were the sole remaining
family in their cycle classification. For paths, they additionally left
open $\{2,4\}$ with unrestricted endpoints and $\{2,4,6\}$ with prescribed
endpoints. Their $\{2,4\}$ hardness proofs for cycles and prescribed
endpoints used Cook reductions, leaving many-one hardness unresolved.
They also proposed studying directed graphs as an open direction.

\paragraph*{Our contributions.}
We resolve all these undirected open cases. Our main result,
Theorem~\ref{thm:odd-hard}, proves NP-completeness for every finite odd
prefix, completing the cycle classification. For $\{2,4\}$, we establish
NP-completeness under many-one reductions for all three variants.
For prescribed-endpoint paths with $\{2,4,6\}$, we give a linear-time algorithm.
Table~\ref{tab:undirected} gives the complete classification, with our
new results in bold.

\begin{table}[!ht]
\centering\scriptsize
\setlength{\tabcolsep}{3pt}
\begin{tabularx}{\textwidth}{@{}YYYY@{}}
\toprule
Normalised set & Cycle & Prescribed-endpoint path & Unrestricted path\\
\midrule
$O_1=\{1\}$
& \NPC\ \cite{Karp1972}
& \NPC\ \cite{IPS82}
& \NPC\ \cite{IPS82}\\

$O_k$, finite odd $k\geq3$
& \textbf{NP-complete\ (Thm.~\ref{thm:odd-hard})}
& \textbf{NP-complete\ (Thm.~\ref{thm:odd-hard})}
& \textbf{NP-complete\ (Thm.~\ref{thm:odd-hard})}\\

$\{2\}$
& \NPC\ \cite{ALM,ALMfull}
& \NPC\ \cite{ALM,ALMfull}
& \NPC\ \cite{ALM,ALMfull}\\

$\{2,4\}$
& \textbf{NP-complete\ (Thm.~\ref{thm:twofour})}
& \textbf{NP-complete\ (Thm.~\ref{thm:twofour})}
& \textbf{NP-complete\ (Thm.~\ref{thm:twofour})}\\

$\{1,2\}$
& \NPC\ \cite{Underground}
& \NPC\ \cite{ALMfull}
& \NPC\ \cite{ALMfull}\\

$E_h$, $h\geq6$ or $h=\infty$
& \LT\ \cite{ALM,ALMfull}
& \textbf{Linear time\ (Thm.~\ref{thm:even-easy})}
& \LT\ \cite{ALM,ALMfull}\\

$\{1\}\cup E_h$, $h\geq4$ or $h=\infty$
& Trivial\ \cite{ALM,ALMfull}
& \LT\ \cite{ALM,ALMfull}
& Trivial\ \cite{ALMfull}\\

$\{1,2,3\}\subseteq\operatorname{cl}_2(S)$
& Trivial\ \cite{Karaganis}
& Trivial\ \cite{Karaganis}
& Trivial\ \cite{Karaganis}\\

$O_\infty$
& \LT\ \cite{ALM,ALMfull}
& \LT\ \cite{ALM,ALMfull}$^\dagger$
& \LT\ \cite{ALM,ALMfull}$^\dagger$\\

$\varnothing$
& Trivial
& Trivial
& Trivial\\
\bottomrule
\end{tabularx}
\caption{The complete undirected classification.
Entries marked $\dagger$ follow by adapting the proof of
Theorem~7.1 in~\cite{ALMfull}.
NP-completeness uses many-one reductions.
``Trivial'' means a constant answer on connected inputs of the
relevant minimum order (yes for the mixed rows, no for the empty set);
checking connectedness takes linear time.
For the empty set, the unrestricted path problem accepts exactly
the one-vertex inputs, whereas the prescribed-endpoint problem
has no yes-instances.}
\label{tab:undirected}
\end{table}

For directed graphs, we answer the finite-set question: every fixed
finite nonempty $S$ gives NP-completeness, even on oriented graphs.
For infinite sets, cofinite $S$ gives linear-time algorithms, whereas
we exhibit both a fixed infinite NP-complete case and a fixed recursive
infinite undecidable case, even on strongly connected oriented graphs.
Table~\ref{tab:directed} summarises these results; a complete
classification of infinite directed sets remains open.

\begin{table}[!ht]
\centering\scriptsize
\setlength{\tabcolsep}{3pt}
\begin{tabularx}{\textwidth}{@{}Yl@{}}
\toprule
Fixed set and input restriction & All three problems\\
\midrule
$S=\varnothing$ & Trivial\\
Finite nonempty $S$, oriented graphs & \textbf{NP-complete\ (Thm.~\ref{thm:directed-finite})}\\
Cofinite $S$ & \textbf{Linear time\ (Thm.~\ref{thm:cofinite})}\\
$S_0=\{2\}\cup O_\infty$, strongly connected bipartite oriented graphs & \textbf{NP-complete\ (Thm.~\ref{thm:infinite-hard})}\\
A recursive infinite $S^\star$, strongly connected oriented graphs & \textbf{Undecidable (Thm.~\ref{thm:undecidable})}\\
\bottomrule
\end{tabularx}
\caption{Directed results; bold entries are our contributions. ``Trivial'' means no on inputs of at least two
vertices; a single vertex is a path. NP-completeness uses many-one
reductions. These rows do not classify all infinite sets.}
\label{tab:directed}
\end{table}

\paragraph*{Organisation.}
Section~\ref{sec:preliminaries} establishes the common framework.
Sections~\ref{sec:undir-hard} and~\ref{sec:undir-easy} prove the
undirected results. Sections~\ref{sec:dir-finite} and~\ref{sec:dir-infinite}
treat finite and infinite directed length sets, respectively.

\section{Preliminaries}\label{sec:preliminaries}

We relate the three problems to ordinary Hamiltonicity in the
compatibility graph $G_S$, giving polynomial-time construction and
NP membership for fixed undirected sets and finite directed sets.
All graphs are finite. Undirected graphs are simple. Digraphs have no
parallel arcs and may have loops unless a restriction is stated; an
\emph{oriented graph} has neither loops nor antiparallel arcs. Write
$V(G)$ for the vertex set, and $E(G)$ or $A(G)$ for the edge or arc set.
A walk of length $\ell$ has $\ell$ edges or arcs, counted with multiplicity.
Its vertices need not be distinct. A path has no repeated vertices. Write $G[U]$ for the subgraph induced
by $U$, and $N_G(v),N_G[v]$ for the open and closed neighbourhoods of
$v$. In digraphs, $N_G^+(v),N_G^-(v)$ and $d_G^+(v),d_G^-(v)$ denote
out- and in-neighbourhoods and degrees. A digraph is bipartite when its
underlying undirected graph is bipartite. The line graph of an
undirected graph has one vertex for each edge, with adjacency when
the corresponding edges share an endpoint.
A graph or digraph is \emph{Hamiltonian-connected} if it has a Hamiltonian
path between every pair of distinct vertices (in the specified order for
a digraph). The power $G^r$ of an undirected graph joins distinct vertices
whose distance in $G$ is at most $r$.

Fix $S\subseteq\Npos$. The \emph{$S$-Hamiltonian Path} problem asks whether
$V(G)$ has an ordering $v_1,\ldots,v_n$ such that each $v_i$ is joined to
$v_{i+1}$ by a walk of length in $S$. The \emph{prescribed-endpoint}
variant additionally requires $v_1=s$ and $v_n=t$, for two given distinct
vertices. The \emph{$S$-Hamiltonian Cycle} problem requires $n\geq3$ and
also a permitted walk from $v_n$ to $v_1$. For digraphs, walks must respect
arc directions. Witnessing walks are independent and may repeat vertices
or pass through any vertex of the ordering.
A single vertex is a Hamiltonian path; empty inputs and other bounded-order
cases are handled directly. Unless stated otherwise, paths have at least
two vertices and cycles at least three. Directed cycles used to pad walks
may have length one or two.

All NP-hardness reductions below are polynomial-time many-one
(\emph{Karp}) reductions. Undecidability uses computable many-one
reductions. No representation of the fixed set $S$ is part of an input.

\begin{definition}[Compatibility graph or digraph]\label{def:compatibility}
For an undirected graph $G$, let $G_{S}$ be the simple graph on
$V(G)$ in which distinct $x,y$ are adjacent precisely when an $x$--$y$
walk in $G$ has a length in $S$. For a digraph $G$, define the loopless
digraph $G_S$ analogously, using directed walks from $x$ to $y$.
\end{definition}

Each $S$-Hamiltonian problem on $G$ is exactly its ordinary Hamiltonian
counterpart on $G_S$. For every fixed undirected $S$, this graph can be
constructed in polynomial time: breadth-first search in the parity lift
$V(G)\times\{0,1\}$, with edges $(u,i)(v,1-i)$ for $uv\in E(G)$, finds
the shortest walks of each parity; their lengths are compared with the
fixed prefixes of $\operatorname{cl}_2(S)$. For finite directed $S$,
exact-length reachability up to $\max S$ likewise constructs $G_S$ in
polynomial time (the empty set gives no arcs).
Consequently, all three problems belong to NP for every fixed undirected
$S$ and every fixed finite directed $S$: a certificate is an ordering of
$V(G)$, checked in $G_S$. Below, NP-hardness therefore suffices to establish
NP-completeness in these cases. For infinite directed sets, constructing
$G_S$ need not be decidable.

\section{Undirected Graphs: NP-Completeness}
\label{sec:undir-hard}

We prove NP-completeness for finite odd prefixes and many-one
NP-completeness for $\{2,4\}$, settling the outstanding undirected
hardness cases of~\cite{ALM}.

\subsection{Finite odd prefixes}

We settle the sole remaining family in the cycle classification
of~\cite{ALM}: every finite nonempty odd prefix gives NP-completeness
for cycles and both path variants.

\begin{theorem}[Finite odd prefixes]\label{thm:odd-hard}
For every fixed finite odd $k\geq1$, all three $O_k$-Hamiltonian problems
are NP-complete. The cycle problem remains NP-complete on connected
bipartite graphs of maximum degree at most $2(k+1)$; both path variants
remain NP-complete on connected bipartite graphs of maximum degree at
most $2(k+1)+1$.
\end{theorem}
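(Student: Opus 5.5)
The plan is a many-one reduction from ordinary Hamiltonian Cycle restricted to connected bipartite graphs of maximum degree~$3$, which is NP-complete (Akiyama, Nishizeki and Saito). Membership in NP is already given in Section~\ref{sec:preliminaries}, and the case $k=1$ is the classical one, so assume $k\geq3$ and write $k=2m+1$.

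The key observation is that in a bipartite graph $G$ every walk between the two colour classes has odd length, and every walk inside one class has even length. Hence $G_{O_k}$ joins $x$ and $y$ exactly when they lie in opposite classes and $\dist_G(x,y)\leq k$. So $G_{O_k}$ is the bipartite ``odd $k$-th power'' of $G$. Consequently any $O_k$-Hamiltonian cycle alternates colour classes, and the two classes must have equal size. I would use this balance and alternation constraint as the main rigidity tool.

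The construction. Given a bipartite instance $H$, first subdivide every edge of $H$ into a path of length $k$. Since $k$ is odd, bipartiteness is preserved, and original vertices adjacent in $H$ are now at distance exactly $k$. Original vertices that are not adjacent in $H$ are at distance at least $2k>k$, or in the same class. Then attach to each original vertex $v$ a small ``anchor'' gadget: pendant paths and leaves placed at distance about $m$ from $v$. The gadget is chosen so that the subdivision vertices and gadget vertices near $v$ can only be reached, in $G_{O_k}$, from vertices of the ball of radius $k$ around $v$. Its size is tuned so that the colour-class counts force the following: each original vertex's neighbourhood is swept in one ``visit'', and an $O_k$-Hamiltonian cycle leaves this ball through exactly two of the subdivided edges at $v$.

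Degree accounting gives the bounds. Each original vertex has at most $3$ subdivision neighbours plus gadget attachments, and each internal vertex has degree $2$ plus attachments. I would aim for every degree to be at most $2(k+1)$, matching the statement.

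Completeness. Given a Hamiltonian cycle $C$ of $H$, I would write down an explicit zigzag order. At each original vertex $v$ it visits the vertices of the gadget at $v$ and the halves of the subdivided edges at $v$ that are not in $C$; these halves are reachable by odd walks of length $\leq k$ bouncing back and forth. For each edge of $C$, it then crosses to the next original vertex along the subdivided path.

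Soundness is the main obstacle. One must show that any Hamiltonian cycle of $G_{O_k}$ induces a Hamiltonian cycle of $H$. My approach would be a local counting argument in $G_{O_k}$. The far leaves of each gadget have all their $G_{O_k}$-neighbours inside the radius-$k$ ball of $v$. Combining this with alternation of colours, each ball must be entered and left a bounded number of times, and each subdivided edge must be used ``through'' either fully or not at all. This yields a $2$-factor of $H$ that is connected because the cycle is. If this counting fails for general $k$, I would first try lengthening the subdivisions to $3k$ (still odd) to separate balls completely, at the cost of redoing the degree bound.

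Path variants. For both path variants, add a pendant vertex at a chosen vertex $s$ and its twin, in the standard way, so that they are forced as endpoints. This raises the maximum degree by one, giving the bound $2(k+1)+1$. For unrestricted endpoints, attach leaves far enough away that they must be endpoints.
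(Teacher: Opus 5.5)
Your framework is reasonable: odd corridors of length $k$ for source edges, the fact that $G_{O_k}$ joins exactly the opposite-coloured pairs at distance at most $k$, and a local count per source vertex. But the part that constitutes the proof is missing. The anchor gadget is never specified, the soundness count is never carried out, and the gadget you sketch would not work.

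The missing idea is a set $D_v$, independent in $G_{O_k}$, whose $G_{O_k}$-neighbourhood is \emph{exactly} a monochromatic set $M_v$, with $|D_v|=|M_v|-1$. All $2|D_v|$ cycle edges at $D_v$ then stay inside $Q_v=M_v\cup D_v$. So $Q_v$ has $|Q_v|-2|D_v|-z=1-z$ runs, where $z\geq0$ counts cycle edges inside $M_v$; hence it has exactly one run (Lemma~\ref{lem:block-count}). Contracting these runs to a Hamiltonian cycle of $H$ needs two further facts. The sets $M_v$ must partition all non-anchor vertices. Compatibility edges between blocks must join $M_u$ to $M_v$ only when $uv\in E(H)$, which needs its own distance check for nonadjacent $u,v$ on opposite sides.

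Leaves at distance $m$ from $v$ break the partition requirement. Such a leaf sees corridor vertices only up to distance $m+1$ from $v$. For $m\geq2$, the vertex at distance $m-2$ from $v$ on each corridor is seen by no such leaf at either end, so no count controls how it is visited. Counting entries into radius-$k$ balls does not help either, since the balls of adjacent source vertices share whole corridors.

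The anchors must sit at distance one from the corridor ends. Then each anchor sees exactly the vertices at even distance $0,2,\ldots,k-1$ from its end on every incident corridor, and the two ends split each corridor by parity. The paper gives each corridor private terminals $t_{v,e}$ and joins $(m+1)\deg(v)-1$ anchors to all terminals at $v$. Attaching $m\deg_H(v)$ leaves directly to your hub $v$ would serve equally well.

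The endpoint forcing for the path variants also fails as stated. A leaf of $G$ is not a leaf of $G_{O_k}$: it is compatible with every opposite-coloured vertex within distance $k$. Pendant tails can also be traversed in the interior of the ordering. For example, with $k=3$ and a tail $w,p_1,p_2,p_3,p_4$, the segment $w,p_3,p_4,p_1,p_2,x$ with $x\in N_G(w)\setminus\{p_1\}$ is valid.

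So the twin trick must be applied to the source graph, or, as in the paper, one reduces from Hamiltonian path with prescribed endpoints; the target must then force the endpoints itself. The paper uses the same identity. For $v\in\{s,t\}$ it takes $|D_v|=|M_v|$, so, since $M_v$ is independent, the run formula gives $r=e_{D_v}\geq1$. With only two path endpoints, each of $D_s,D_t$ contains exactly one, and every block is again a single run. This handles the unrestricted variant with the same graph. The extra anchor adjacent to each terminal at $s$ and $t$ is what produces the $+1$ in the degree bound.
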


The reduction forces each vertex gadget to be traversed consecutively.
The following elementary count will also be used in the directed reduction.

A \emph{run} of a vertex set $Q$ on a path or cycle $W$ is a maximal
consecutive segment of $W$ contained in $Q$. If $W$ is directed, its runs are the connected components of the
subgraph induced by $Q\cap V(W)$ in the underlying undirected graph
of $W$; all traversals must still respect the arc directions.

A block is a designated set of vertices of the reduction; it need not
be consecutive in an arbitrary ordering. Its runs are the consecutive
pieces in which the ordering visits it (Figure~\ref{fig:block-runs}).

\begin{figure}[ht]
\centering
\begin{tikzpicture}[font=\small,
 mvertex/.style={circle,draw,fill=blue!20,inner sep=2pt},
 dvertex/.style={circle,draw,fill=orange!35,inner sep=2pt},
 outside/.style={circle,draw,fill=gray!20,inner sep=2pt}]
\node[mvertex] (v0) at (0,0) {$m_1$};
\node[dvertex] (v1) at (1.1,0) {$d_1$};
\node[mvertex] (v2) at (2.2,0) {$m_2$};
\node[outside] (v3) at (3.3,0) {$z_1$};
\node[mvertex] (v4) at (4.4,0) {$m_3$};
\node[dvertex] (v5) at (5.5,0) {$d_2$};
\node[mvertex] (v6) at (6.6,0) {$m_4$};
\node[outside] (v7) at (7.7,0) {$z_2$};
\node[mvertex] (v8) at (8.8,0) {$m_5$};
\foreach \i/\j in {0/1,1/2,2/3,3/4,4/5,5/6,6/7,7/8}
 \draw (v\i)--(v\j);
\draw[rounded corners,thick] (-.4,-.4) rectangle (2.6,.4);
\draw[rounded corners,thick] (4,-.4) rectangle (7,.4);
\draw[rounded corners,thick] (8.4,-.4) rectangle (9.2,.4);
\node at (1.1,.75) {run 1};
\node at (5.5,.75) {run 2};
\node at (8.8,.75) {run 3};
\node[left] at (-.5,0) {$W$};
\node[align=center] at (4.4,-.95) {$Q=M\mathbin{\dot\cup}D$: blue vertices in $M$, orange vertices in $D$;\\grey vertices lie outside $Q$.};
\end{tikzpicture}
\caption{One block $Q$ visited in three runs by a spanning path $W$.
The block is the fixed set of coloured vertices, whereas each outlined
run is a maximal consecutive segment of $W$ contained in that set.
The reductions use the counting lemma to force a single run per block.}
\label{fig:block-runs}
\end{figure}

\begin{lemma}\label{lem:block-count}
Let $W$ be a Hamiltonian path or cycle in a graph or digraph $H$, and let
$Q=M\mathbin{\dot\cup}D$ be a nonempty proper subset of $V(H)$. Suppose
that $D$ is independent and all neighbours of vertices in $D$ belong to
$M$; in the directed case this includes both in- and out-neighbours.
Let $e_D$ be the number of endpoints of $W$ in $D$ (zero for a cycle),
and let $z$ count the edges or arcs of $W$ with both ends in $M$.
Then the number $r$ of $Q$-runs is
\begin{equation}\label{eq:block-count}
 r=|M|-|D|+e_D-z.
\end{equation}
In particular, if $|D|=|M|-1$ and $e_D=0$, then $Q$ is one run.
If $M$ is also independent, then $z=0$.
\end{lemma}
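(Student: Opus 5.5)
We prove \eqref{eq:block-count} by double counting the edges (or arcs) of $W$ induced on $Q$. The runs of $Q$ are exactly the connected components of the subgraph $W[Q]$, where $W$ is viewed as an undirected path or cycle. Since $Q$ is a nonempty proper subset of $V(H)=V(W)$, no run can be the whole cycle. Hence every run is a path, and $W[Q]$ is a linear forest with $|Q|$ vertices. A forest has as many components as vertices minus edges, so
\[
 r=|Q|-|E(W[Q])|=|M|+|D|-|E(W[Q])|.
\]
Directions play no role here, because runs are defined through the underlying undirected graph.

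Next we count $|E(W[Q])|$. Split these edges into those with both ends in $M$ (there are $z$ of them) and those with at least one end in $D$. Because $D$ is independent in $H$, no edge of $W$ joins two vertices of $D$. So every edge of the second kind has exactly one end in $D$, and these edges are counted exactly once by summing over $d\in D$ the $W$-degree of $d$.

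Every $W$-neighbour of $d\in D$ is an $H$-neighbour of $d$. By hypothesis it lies in $M\subseteq Q$, in the directed case whether the connecting arc enters or leaves $d$. Thus every $W$-edge at $d$ lies inside $Q$ and is of the second kind.

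It remains to find the $W$-degree of $d$. It is $2$ unless $d$ is an endpoint of the path $W$, in which case it is $1$. We should check the degenerate case in which $W$ is a single vertex lying in $D$. This cannot happen, because then $Q=V(H)$, contradicting properness. Summing over $D$ gives $2|D|-e_D$ edges of the second kind. Therefore
\[
 r=|M|+|D|-z-(2|D|-e_D)=|M|-|D|+e_D-z.
\]

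The two consequences follow directly. If $M$ is independent, no edge of $W$ has both ends in $M$, so $z=0$. If $|D|=|M|-1$ and $e_D=0$, the formula gives $r=1-z\le 1$. On the other hand $r\ge1$, since $Q$ is nonempty and every vertex of $Q$ lies on $W$. Hence $r=1$, and this also forces $z=0$.

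The only delicate point is the components argument: it needs $W[Q]$ to be acyclic, which is exactly where properness of $Q$ is used in the cycle case. Everything else is routine bookkeeping.
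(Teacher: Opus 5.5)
Your proof is correct and is the paper's argument: properness of $Q$ makes the $Q$-induced part of $W$ a linear forest with $r$ components. Its edges number $z$ plus $2|D|-e_D$ (every $W$-edge at $D$ stays inside $Q$), which gives $r=|Q|-(2|D|-e_D+z)$ and $1\le r=1-z\le 1$ in the special case. Your extra checks (the single-vertex case, and that $r=1$ forces $z=0$) are harmless refinements the paper leaves implicit.
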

\begin{proof}
Since $Q$ is proper, the edges of $W$ induced by $Q$ form a linear forest
with $r$ components. Every edge of $W$ incident with $D$ stays in $Q$, so
this forest has exactly $2|D|-e_D+z$ edges. Consequently,
$r=|Q|-(2|D|-e_D+z)$, proving~\eqref{eq:block-count}. In the stated
special case, $1\leq r=1-z\leq1$.
\end{proof}

\begin{proof}
We reduce from Hamiltonian Cycle on grid graphs, and from Hamiltonian
Path with prescribed endpoints on grid graphs; both source problems are
NP-complete~\cite{IPS82}. Let $F$ be the source graph. For cycles, assume
that $F$ is connected, has at least three vertices, and has minimum
degree at least two. For paths, let $s\ne t$ be the prescribed endpoints,
and assume that $F$ is connected, every non-endpoint has degree at least
two, and both endpoints have positive degree. Inputs violating these
necessary conditions are mapped to fixed no-instances. The path on three
vertices is such a cycle no-instance, and the claw is such a path
no-instance, for every $O_k$. Recall that grid graphs are bipartite and
have maximum degree four.

\paragraph*{Construction.}
Put $\lambda=(k+1)/2$. For every source edge $e=uv$, introduce a private
path $P_e$ of length $k$ with terminals $t_{u,e}$ and $t_{v,e}$. For each
endpoint $v$ of $e$, let $B_{v,e}$ consist of the vertices at even distance
from $t_{v,e}$ along $P_e$. Thus $|B_{v,e}|=\lambda$, and
$B_{u,e},B_{v,e}$ partition $V(P_e)$. Define
\[
 M_v=\bigcup_{e\ni v}B_{v,e},\qquad m_v=|M_v|=\lambda\deg_F(v).
\]
Choose $Z=\varnothing$ for the cycle reduction and $Z=\{s,t\}$ for the
path reductions. Add a set $D_v$ of $m_v-1+\mathbf1_{v\in Z}$ vertices,
each adjacent to every terminal $t_{v,e}$ with $e\ni v$. Set
$Q_v=M_v\mathbin{\dot\cup}D_v$, and call the resulting graph $G_Z$.
For prescribed paths choose $\alpha\in D_s$ and $\beta\in D_t$.
All sets $D_v$ are nonempty by the assumptions on the source.

The sets $Q_v$ partition $V(G_Z)$, and the construction has linear size
for fixed $k$. A bipartition of $F$ extends by placing $M_v$ on the side
of $v$ and $D_v$ on the opposite side. The graph is connected.
A terminal has degree $m_v+\mathbf1_{v\in Z}$, a vertex in $D_v$ has
degree $\deg_F(v)$, and other vertices have degree two. These observations
give the asserted degree bounds.

Figure~\ref{fig:odd-reduction} illustrates the replacement of a source
vertex and its incident edges for $k=3$.

\begin{figure}[ht]
\centering
\includegraphics[width=.95\linewidth]{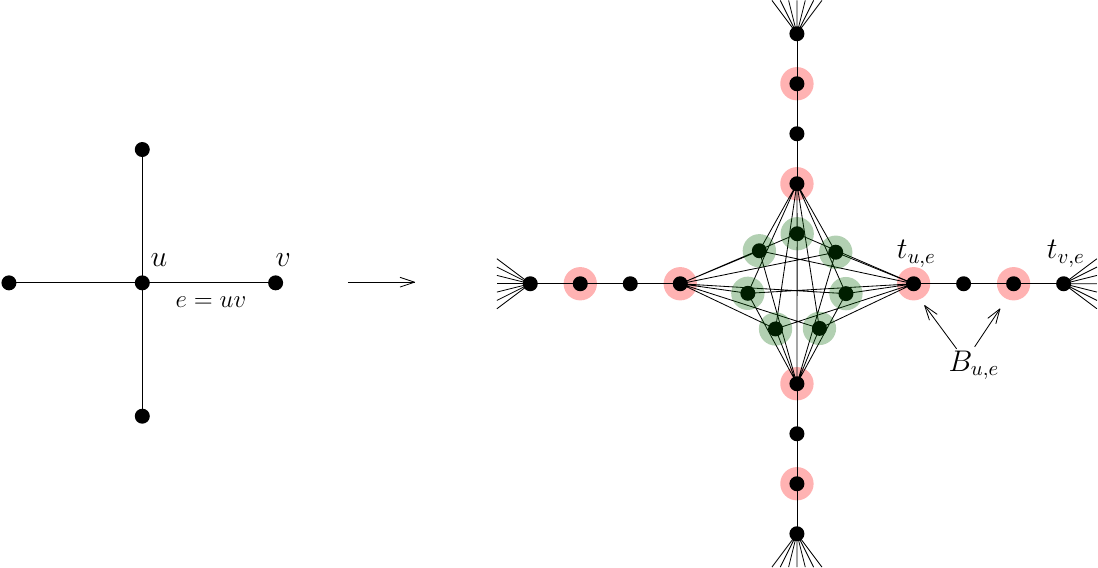}
\caption{The undirected odd-prefix reduction for $k=3$, around a source
vertex $u$ of degree four. Each source edge is replaced by a private
length-three corridor. Red vertices form $M_u$: the even positions
$B_{u,e}$ on each incident corridor. The seven green vertices form $D_u$
in the cycle construction and are each joined to all four terminals
$t_{u,e}$. For the path construction, an endpoint gadget has one
additional vertex in $D_u$.}
\label{fig:odd-reduction}
\end{figure}

Let $H={G_Z}_{O_k}$.
\begin{claim}\label{cl:odd-structure}
For every source vertex $v$ and every $d\in D_v$, $N_H(d)=M_v$, and both
$D_v$ and $M_v$ are independent in $H$. Moreover, an edge of $H$ between
different blocks $Q_u,Q_v$ has its ends in $M_u,M_v$ and satisfies
$uv\in E(F)$. For every source edge $e=uv$, the terminals
$t_{u,e},t_{v,e}$ are adjacent in $H$.
\end{claim}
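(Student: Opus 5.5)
The plan is a parity argument: classify which pairs of vertices of $G_Z$ are joined by an odd walk of length at most $k$. Two facts do most of the work.

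First, $G_Z$ is bipartite, with $M_v$ on the side of $v$ and $D_v$ on the opposite side. Consequently $H$ joins only vertices on opposite sides. In particular, $M_v$ and $D_v$ are independent in $H$: two vertices of $M_v$ lie on the same side, and two vertices of $D_v$ lie on the same side.

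Second, compute distances from a vertex $d\in D_v$ in $G_Z$. Its neighbours are exactly the terminals $t_{v,e}$ with $e\ni v$. A vertex $x\in B_{v,e}$ lies at even distance $2j\le k-1$ from $t_{v,e}$ along $P_e$, so $\dist(d,x)\le 2j+1\le k$. This distance is odd, so $x\in N_H(d)$, giving $M_v\subseteq N_H(d)$.

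For the converse inclusion, let $y$ be joined to $d$ by a walk of odd length at most $k$. By bipartiteness, $y$ lies on the side opposite to $D_v$. I would show that such a walk never leaves $Q_v$ in a way that reaches something outside $M_v$.

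Consider a walk that starts at $d$. To reach a vertex of some $M_u$ with $u\ne v$, or any vertex of $D_u$, the walk must traverse a whole corridor $P_e$ from $t_{v,e}$ to $t_{u,e}$. That costs at least $k$ edges after the first step, so the total length is at least $k+1$. Alternatively, the walk can reach other vertices of $D_v$ only via terminals, and those vertices are on the wrong side anyway.

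Within $Q_v$ plus the corridors, the vertices reachable within $k$ steps and on the correct side are exactly:
\begin{itemize}
\item the vertices of the $B_{v,e}$ at distance at most $k-1$ from $t_{v,e}$ along $P_e$, and
\item nothing further, since vertices of $B_{u,e}$ lie on the same side as $D_v$.
\end{itemize}
Hence $N_H(d)=M_v$.

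For edges of $H$ between different blocks, take $x\in Q_u$ and $y\in Q_v$ with $u\ne v$, and suppose they are joined by an odd walk of length at most $k$.

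If one end, say $x$, lies in $D_u$, then the previous paragraph shows $N_H(x)=M_u\subseteq Q_u$, which is a contradiction. So both ends lie in $M_u$ and $M_v$ respectively.

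Now I need $uv\in E(F)$. Each $M$-vertex lies on a corridor. Any walk from a corridor $P_e$ to a different corridor $P_{e'}$ must pass through a terminal and then a $D$-vertex. I would bound the lengths involved as follows:
\begin{itemize}
\item If $x\in P_e$ and $y\in P_{e'}$ with $e\ne e'$, the walk goes from $x$ to an end of $P_e$, then through a $D_w$ vertex, into $P_{e'}$, and on to $y$.
\item Because $x\in B_{u,e}$ is at even distance from $t_{u,e}$ and $y\in B_{v,e'}$ is at even distance from $t_{v,e'}$, the walk passes through $D_w$ for some $w\in e\cap e'$.
\item If $w=u$, then $y$ would have to be reached from $t_{u,e'}$. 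Since $y\in B_{v,e'}$ is at odd distance from $t_{u,e'}$, this forces traversing parity-compatible lengths. The total length becomes at least $2+\dots$, and I expect it to exceed $k$ unless $w$ is shared in a way that forces $e=e'=uv$.
\end{itemize}
This case analysis is the main obstacle. I would do it by tracking the walk's successive visits to terminals and using that each corridor traversal costs exactly $k$ edges. If the walk stays inside a single corridor $P_e$, then $e$ contains both $u$ and $v$, so $e=uv$.

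Finally, the terminals $t_{u,e}$ and $t_{v,e}$ are joined by $P_e$, a path of odd length $k$, so they are adjacent in $H$.
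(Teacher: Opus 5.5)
Your arguments for $N_H(d)=M_v$, the independence of $D_v$ and $M_v$, the placement of inter-block ends in $M_u,M_v$, and the terminal edge are essentially the paper's. One sentence there is false, however: reaching $M_u$ does not always require a full corridor traversal, since the neighbour of $t_{v,e}$ on $P_{uv}$ lies in $B_{u,uv}\subseteq M_u$. What costs $k+1$ steps is reaching a far terminal; the nearby $B_{u,e}$-vertices are excluded by parity, as you then note.

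The genuine gap is the central assertion that an $H$-edge between $M_u$ and $M_v$ forces $uv\in E(F)$, which you leave at ``I expect it to exceed $k$''. The mechanism you sketch is also off. You assume without justification that the walk passes through a single hub $D_w$ with $w\in e\cap e'$. You then treat $w=u$ as needing a length bound, although in that case $e'$ contains both $u$ and $v$, so $e'=uv$ immediately. The dangerous case is $w\notin\{u,v\}$, a common neighbour of $u$ and $v$. Take $x\in B_{u,uw}$ adjacent to $t_{w,uw}$, $y\in B_{v,vw}$ adjacent to $t_{w,vw}$, and $d\in D_w$. Then the walk $x,t_{w,uw},d,t_{w,vw},y$ has length $4\le k$ whenever $k\ge5$, so no length estimate can rule this configuration out.

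What is missing is to split by the bipartition of $F$ before estimating lengths, as the paper does. If $u,v$ lie on the same side, then $M_u$ and $M_v$ have the same colour in $G_Z$, so no odd walk joins them at all; this disposes of every common-hub configuration. If they lie on opposite sides and are nonadjacent, write $x\in B_{u,ua}$ and $y\in B_{v,vb}$. Then $\{u,a\}\cap\{v,b\}=\varnothing$: nonadjacency gives $u\ne b$ and $a\ne v$, and bipartiteness gives $a\ne b$. Corridor vertices have no neighbours outside their corridor other than hub vertices, and hub vertices are adjacent only to terminals. Hence any $x$--$y$ walk first leaves $P_{ua}$ into $D_u\cup D_a$ and finally enters $P_{vb}$ from $D_v\cup D_b$. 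Between some two consecutive hub visits belonging to different source vertices, it must therefore traverse an entire corridor, costing at least $k+2$ steps. The paper packages this by contracting each $D_w$ to one vertex, obtaining the uniform $(k+2)$-subdivision of $F$, and bounding the length below by $\min(2+3L,3+2L,4+L)>k$ with $L=k+2$. Without this step the claim is not proved.
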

\begin{proof}
Every vertex of $M_v$ is reachable from $d$ by an odd walk of length at
most $k$: move to the appropriate terminal and then an even number of
steps along its corridor. Conversely, a walk of length at most $k$
starting in $D_v$ cannot traverse an entire incident corridor and reach
its far terminal: that requires at least $k+1$ steps. It remains within
$D_v$ and the incident corridors on the $v$ side of their far terminals.
Among these vertices, precisely those in $M_v$ have the opposite
bipartition colour from $d$. Thus an odd walk of length at most $k$ can
end only in $M_v$. Each of $D_v$ and $M_v$ is monochromatic, and hence
independent in $H$.

It remains to exclude compatibility between $M_u$ and $M_v$ when $uv$
is not a source edge. If $u,v$ lie on the same side of the bipartition
of $F$, their $M$-sets have the same colour and no odd walk joins them.
Suppose therefore that $u,v$ lie on opposite sides and are nonadjacent.
Identify all vertices of each $D_w$ into a single vertex $c_w$. This
preserves distances between corridor vertices: a passage through $c_w$
can be lifted through any vertex of $D_w$. The quotient is the uniform
subdivision of $F$ in which each source edge has length $L=k+2$.

Take $x\in B_{u,ua}$ and $y\in B_{v,vb}$. The distances from $x$ to the
two ends $c_u,c_a$ of its subdivided source edge are at least $1,2$,
respectively; the analogous lower bounds for $y$ and $c_v,c_b$ are
$1,2$. Bipartiteness and the nonadjacency of $u,v$ give
\[
 \dist_F(u,v)\geq3,\quad \dist_F(u,b)\geq2,\quad
 \dist_F(a,v)\geq2,\quad \dist_F(a,b)\geq1.
\]
Any $x$--$y$ path must leave the first subdivided edge and enter the
second through their ends. Its length is therefore at least the minimum
of $2+3L$, $3+2L$, $3+2L$, and $4+L$, which exceeds $k$.
Thus no odd walk of length at most $k$ joins $x$ to $y$.
Finally, $P_e$ itself witnesses the asserted terminal edge.
\end{proof}

\paragraph*{Cycle correctness.}
For $Z=\varnothing$, every block induces $K_{m_v,m_v-1}$ in $H$.
A Hamiltonian cycle of $F$ can be expanded at each $v$ into a spanning
alternating path of $H[Q_v]$ between the two terminals used by that cycle.
The terminal edges from Claim~\ref{cl:odd-structure} join these block
paths into a Hamiltonian cycle of $H$.
Conversely, Lemma~\ref{lem:block-count} makes every $Q_v$ a single run
on any Hamiltonian cycle of $H$. Contracting the runs gives a Hamiltonian
cycle of $F$, because every inter-block edge follows a source edge.

\paragraph*{Path correctness.}
For $Z=\{s,t\}$, the two endpoint blocks induce $K_{m_v,m_v}$, whereas
all other blocks induce $K_{m_v,m_v-1}$. A Hamiltonian $s$--$t$ path of
$F$ expands into a Hamiltonian $\alpha$--$\beta$ path of $H$: use an
alternating path from $\alpha$ to the required terminal in $Q_s$, an
alternating path from the required terminal to $\beta$ in $Q_t$, and
terminal-to-terminal alternating paths in the remaining blocks.

For the reverse direction, let $W$ be \emph{any} Hamiltonian path of $H$.
For each endpoint block, equation~\eqref{eq:block-count} gives
$r=e_{D_v}\geq1$, since its two parts have equal size and $M_v$ is
independent. Hence $W$ has one endpoint in $D_s$ and the other in $D_t$.
Both endpoint blocks are single runs; the other blocks are single runs
by the same lemma. Contracting the runs yields a Hamiltonian $s$--$t$
path of $F$. This proves both path reductions, including the unrestricted
variant.
\end{proof}

\subsection{The length set \texorpdfstring{$\{2,4\}$}{\{2,4\}}}

Here we give polynomial-time many-one reductions, strengthening the
Cook hardness result of~\cite{ALM} for cycles and prescribed
endpoints. The construction also establishes hardness with unrestricted
endpoints.

\begin{theorem}\label{thm:twofour}
All three $\{2,4\}$-Hamiltonian problems are NP-complete on planar graphs
of maximum degree four.
\end{theorem}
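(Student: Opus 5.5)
We would prove Theorem~\ref{thm:twofour} with a gadget reduction modelled on the proof of Theorem~\ref{thm:odd-hard}. We reduce from Hamiltonian Cycle, and from prescribed-endpoint Hamiltonian Path, on planar graphs of maximum degree three. Both remain NP-complete on this class; cubic bipartite planar graphs or grid graphs would also do, and we would choose whichever gives the cleanest degree count. Membership in NP was established in Section~\ref{sec:preliminaries}, so only hardness remains.

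A $\{2,4\}$-walk preserves the bipartition colour in a bipartite graph. We therefore build a graph $G$ that is \emph{not} bipartite in the relevant way, or we arrange that each colour class of $G$ carries the intended gadget structure in $G_{\{2,4\}}$. Concretely, we replace each source edge $e=uv$ by a subdivided corridor. Its length is chosen so that the $M$-vertices of the two end-gadgets lie at even distance at most $4$ across the corridor only through one designated pair, the analogue of the terminal edge in Claim~\ref{cl:odd-structure}. We replace each source vertex $v$ by a block $Q_v=M_v\mathbin{\dot\cup}D_v$. Here the vertices of $D_v$ are pendant-like vertices placed at distance $2$ from every vertex of $M_v$ and at distance more than $4$ from everything outside $Q_v$. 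This yields $N_H(d)=M_v$ in $H=G_{\{2,4\}}$.

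The key structural claim, proved in the same way as Claim~\ref{cl:odd-structure}, has three parts: $D_v$ is independent in $H$, every edge of $H$ leaving $Q_v$ starts in $M_v$ and follows a source edge, and $|D_v|=|M_v|-1+\mathbf 1_{v\in Z}$. Lemma~\ref{lem:block-count} then forces single runs per block, with $z$ bounded because we arrange that $M_v$ is independent in $H$. Contracting the runs recovers a Hamiltonian cycle or path of $F$. The forward direction is obtained by expanding each vertex into a spanning path of $H[Q_v]$ between the required terminals.

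The main obstacle is planarity together with maximum degree four. A vertex of $D_v$ cannot simply be joined to all terminals, as was done in the odd case. Distance-$2$ adjacency is achieved by a common neighbour, so a single hub vertex adjacent to all of $M_v$ would have unbounded degree and would also make $M_v$ a clique in $H$. We would therefore build $Q_v$ as a short path or cycle of alternating ``hub'' and ``spoke'' vertices of bounded degree. Walks of length $2$ and $4$ would then realise the required bipartite pattern locally. The counting identity of Lemma~\ref{lem:block-count} would be replaced by a direct local argument that any Hamiltonian cycle of $H$ traverses each gadget in one run. Verifying this case analysis for a constant-size gadget is where we expect most of the work to lie, and we would check it exhaustively on the fixed gadget. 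The unrestricted-path case follows as in Theorem~\ref{thm:odd-hard}, provided the endpoint gadgets have one extra $D$-vertex, which forces both path ends into $D_s$ and $D_t$.
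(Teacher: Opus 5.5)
\textbf{There is a genuine gap: the block structure your plan relies on cannot exist for $\{2,4\}$.}

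\emph{Your gadget specification is self-contradictory.} Suppose every $d\in D_v$ is at distance $2$ from every vertex of $M_v$. Then any two $d,d'\in D_v$ are joined by a walk $d,\cdot,m,\cdot,d'$ of length $4$. Likewise, any two $m,m'\in M_v$ are joined by a walk $m,\cdot,d,\cdot,m'$ of length $4$. So $H[Q_v]$ is a clique; this is the objection you raise against a single hub, one step further out. Lemma~\ref{lem:block-count} needs $D_v$ independent (independence of $M_v$ is not needed there), so it gives you nothing.

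\emph{This is not a defect of one particular gadget.} By Lemma~\ref{lem:common-neighbour}, $H=K^2$ with $K=G_{\{2\}}$. In a square, every nonempty neighbourhood $N_H(d)$ induces a connected subgraph, because the clique $N_K(d)$ dominates it. Hence a vertex $d$ with $N_H(d)=M_v$ independent and $|M_v|\ge2$, as in the odd case, never occurs. Suppose you insist on $D_v$ independent, $N_H(D_v)\subseteq M_v$ and $|D_v|=|M_v|-1$ in a connected $H$. A short count then shows two things. Each $d$ is a pendant vertex of $K$ with a private neighbour. All edges of $K$ leaving $Q_v$ meet a single vertex of $M_v$. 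You are pushed back to the pendant-path structure of Lemma~\ref{lem:pendant-square}, that is, to Hamiltonicity of squares, not to hub--spoke gadgets.

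\emph{The parity issue is also unresolved.} A $\{2,4\}$-walk preserves colour, so a bipartite $G$ gives a disconnected $H$ with no Hamiltonian cycle or path. Your alternative of letting ``each colour class carry the gadget structure'' is therefore impossible, and $G$ must contain odd cycles. Then every vertex of both parity classes must be covered by the same cycle or path, including corridor vertices of the other parity, hubs and spokes. You would also have to control the extra compatibilities the odd cycles create. Your proposal assigns none of these vertices to blocks and offers no such control. The decisive verification is deferred to an exhaustive check of an unspecified gadget.

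\emph{The idea you are missing is how the paper handles both issues.}
\begin{enumerate}
\item Take $G$ to be the incidence graph of the pendant-path graph $J$. The node class of $H$ then induces exactly $J^2$.
\item The edge class is essentially the square of the line graph of $J$. The line-graph-square lemma of Amarilli, Lombardo, and Monet makes that square Hamiltonian with flexible ends.
\item A seven-vertex gadget containing the triangle $d_5d_6d_7$ joins the two classes through the two-vertex interface $\{d_3,d_5\}$. This forces $N$ into a single run.
\item Three copies linked at their $d_6$ vertices handle both path variants.
\end{enumerate}
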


We first isolate the two ingredients of the reduction. The first is the
pendant-path construction underlying the classical hardness of
Hamiltonicity in graph squares~\cite{Underground}.

\begin{lemma}\label{lem:pendant-square}
Let $F$ be a graph of order at least three, and obtain $J$ by adding,
for each $v\in V(F)$, new vertices $a_v,b_v$ and edges $va_v,a_vb_v$.
Then $F$ is Hamiltonian if and only if $J^2$ is Hamiltonian. Moreover,
every Hamiltonian cycle of $J^2$ contains every edge $a_vb_v$.
\end{lemma}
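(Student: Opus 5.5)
I would prove Lemma~\ref{lem:pendant-square} by analysing the neighbourhoods of the pendant vertices in $J^2$ and then contracting. In $J^2$ we have $N(b_v)=\{a_v,v\}$, so $b_v$ has degree exactly two. Every Hamiltonian cycle of $J^2$ therefore uses both edges $b_va_v$ and $b_vv$, which already gives the ``moreover'' statement. Similarly, $N(a_v)=\{v,b_v\}\cup N_F(v)$, since $a_v$ is at distance two from each neighbour of $v$ and at distance at least three from every other vertex.

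For the forward direction, take a Hamiltonian cycle $v_1v_2\cdots v_n v_1$ of $F$ with $n\geq 3$. Replace each vertex $v_i$ by the segment $v_i\,b_{v_i}\,a_{v_i}$, and then step from $a_{v_i}$ to $v_{i+1}$. This step is an edge of $J^2$, because $a_{v_i}v_iv_{i+1}$ is a path of length two in $J$. The edges $v_ib_{v_i}$ and $b_{v_i}a_{v_i}$ are also edges of $J^2$. The result is a Hamiltonian cycle of $J^2$.

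For the converse, let $C$ be a Hamiltonian cycle of $J^2$. By the degree-two observation, $C$ contains the segment $v\,b_v\,a_v$ for every $v$. Now look at the other neighbour $w$ of $a_v$ on $C$. It lies in $N(a_v)\setminus\{v,b_v\}=N_F(v)$, so $w=u$ for some $u\in V(F)$ with $uv\in E(F)$. It cannot be $v$, since $v$ already occurs in the segment. Hence $C$ has the form $v_1b_{v_1}a_{v_1}v_2b_{v_2}a_{v_2}\cdots$, where the vertices $v_i$ run over $V(F)$ in some cyclic order and each consecutive pair $v_iv_{i+1}$ is an edge of $F$. Contracting each triple $\{v_i,b_{v_i},a_{v_i}\}$ gives a Hamiltonian cycle of $F$, and it has at least three vertices because $|V(F)|\geq 3$.

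The orientation needs some care. Along $C$ the segment may appear as $a_v b_v v$, in which case the neighbour of $v$ outside the segment is some $a_u$, and by symmetry of the argument $uv\in E(F)$. To handle both cases at once, fix a traversal direction of $C$ and state the argument as ``each triple is a consecutive segment with $b_v$ in the middle, and consecutive triples are linked by an edge $a_xy$ with $xy\in E(F)$.'' This is the only mildly delicate step; the rest is a direct neighbourhood check in $J^2$.
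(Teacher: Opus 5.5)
Your proof is correct and follows the paper's argument: the degree-two vertex $b_v$ forces the segment $v\,b_v\,a_v$, and excluding the triangle puts the other cycle-neighbour of $a_v$ in $N_F(v)$. Contracting and expanding these triples then gives the two directions. The one difference is how cycle edges between two original vertices are ruled out: the paper counts the remaining incidences at original vertices, whereas your ``Hence'' follows by reading $C$ from $v_1$ towards $b_{v_1}$ (each original vertex is entered from an $a$-vertex and so must continue through its own segment), which also shows every triple appears in the same direction, so your final orientation paragraph and its appeal to symmetry are unnecessary.
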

\begin{proof}
In $J^2$, the vertex $b_v$ has precisely the neighbours $a_v,v$.
A Hamiltonian cycle must therefore use both incident edges.
It cannot also use $va_v$, because these three edges would form a proper
cycle. The other cycle neighbour of $a_v$ consequently belongs to
$N_F(v)$. There is one such choice for each source vertex, and each original
vertex has only one remaining cycle incidence after its edge to $b_v$.
These choices therefore use all remaining incidences at original vertices. Hence all cycle edges outside the paths
$a_v,b_v,v$ join $a_v$ to a neighbour of $v$ in $F$. Contracting these
paths gives a Hamiltonian cycle of $F$. Conversely, expand every vertex
$v$ of a Hamiltonian cycle of $F$ into $v,b_v,a_v$, oriented in the
cycle order. The connecting edge from $a_v$ to the next original vertex
exists in $J^2$.
\end{proof}

For a graph $J$ and an edge $xy\in E(J)$, construct $B=B(J,xy)$ as
follows. Start with the incidence graph of $J$: replace each vertex $u$
by a node $u'$, each edge $uv$ by a vertex $e_{uv}$, and add
$u'e_{uv},v'e_{uv}$. Write $N=\{u':u\in V(J)\}$. Add seven vertices
$d_1,\ldots,d_7$ and the edges
\begin{equation}\label{eq:incidence-gadget}
 \begin{split}
 &d_1d_3,\ d_2d_3,\ d_3d_4,\ d_4d_5,\ d_5d_6,\ d_5d_7,\ d_6d_7,\\
 &x'd_1,\ y'd_2.
 \end{split}
\end{equation}
This is the incidence gadget of Amarilli, Lombardo, and
Monet~\cite[Proposition~4.2]{ALM}; see Figure~\ref{fig:twofour-gadget}.

\begin{figure}[ht]
\centering
\begin{tikzpicture}[every node/.style={font=\small},
  vertex/.style={circle,draw,inner sep=2pt,minimum size=5mm}]
\node[vertex] (x) at (0,1) {$x'$};
\node[vertex] (y) at (0,-1) {$y'$};
\node[vertex] (d1) at (1.4,1) {$d_1$};
\node[vertex] (d2) at (1.4,-1) {$d_2$};
\node[vertex] (d3) at (2.8,0) {$d_3$};
\node[vertex] (d4) at (4.2,0) {$d_4$};
\node[vertex] (d5) at (5.6,0) {$d_5$};
\node[vertex] (d6) at (7,0.85) {$d_6$};
\node[vertex] (d7) at (7,-0.85) {$d_7$};
\draw (x)--(d1)--(d3)--(d2)--(y);
\draw (d3)--(d4)--(d5)--(d6)--(d7)--(d5);
\node[align=center] at (0,-1.9) {Node vertices in\\the incidence graph of $J$};
\end{tikzpicture}
\caption{The gadget attached to the incidence graph of $J$. The incidence edges are omitted. The cycle reduction uses one copy. The path reductions use three copies, joined by edges between their $d_6$ vertices.}
\label{fig:twofour-gadget}
\end{figure}

\begin{lemma}[Incidence gadget]\label{lem:incidence-gadget}
Let $H=B_{\{2,4\}}$. Then
\begin{enumerate}[(i)]
\item $H[N]$ is naturally isomorphic to $J^2$;
\item every neighbour of $N$ outside $N$ belongs to $\{d_3,d_5\}$;
\item $N_H(d_5)\cap N=\{x',y'\}$ and
$N_H(d_3)\cap N=\{u':u\in N_J[x]\cup N_J[y]\}$.
\end{enumerate}
If $J^2$ has a Hamiltonian cycle containing $xy$, then $H$ has a
Hamiltonian cycle containing $d_6d_7$.
\end{lemma}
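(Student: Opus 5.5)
The plan is to compute $H=B_{\{2,4\}}$ directly, using the parity structure of $B$, and then splice a gadget detour into the given cycle of $J^2$. The incidence graph of $J$ is bipartite, with $N$ on one side and the edge vertices $e_{uv}$ on the other. The gadget edges extend this two-colouring: $d_1,d_2,d_4$ join the edge side and $d_3,d_5$ join the node side. The only odd cycle is the triangle $d_5d_6d_7$. A shortest walk from $N$ to this triangle has length at least four ($x'd_1d_3d_4d_5$), so any even walk of length at most $4$ that starts in $N$ and changes parity is too short to use the triangle. Hence every $\{2,4\}$-walk from $N$ respects the bipartition and stays within distance $4$. Statements (i)--(iii) then reduce to enumerating walks of length $2$ and $4$ in $B$.

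\textbf{Step 1 (the three items).} For (i), take $u',v'\in N$. A length-$2$ walk through $e_{uv}$ exists exactly when $uv\in E(J)$. A length-$4$ walk inside the incidence graph exists exactly when $\dist_J(u,v)\le2$. The only $N$--$N$ walk of length at most $4$ using the gadget is $x'd_1d_3d_2y'$, and $xy$ is already an edge of $J$. So $H[N]=J^2$ under $u\mapsto u'$. For (ii), parity excludes the edge vertices and $d_1,d_2,d_4$, and distance excludes $d_6,d_7$ (at least $5$ from $N$). Only $d_3$ (distance $2$ from $x',y'$) and $d_5$ (distance $4$) remain. For (iii), $d_5$ is reached only from $x',y'$. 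The vertex $d_3$ is reached in two steps from $x',y'$, and in four steps from $u'$ exactly via $u'e_{ux}x'd_1d_3$ or the analogous walk through $y'$. This gives $N_J[x]\cup N_J[y]$.

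\textbf{Step 2 (the cycle).} Let $C$ be a Hamiltonian cycle of $J^2$ through $xy$, and let $C-xy$ be its $y$--$x$ Hamiltonian path. By (i), this path transfers to a $y'$--$x'$ path in $H[N]$ covering $N$. It remains to cover $\{d_3,d_5\}$ together with the whole edge side $E\cup\{d_1,d_2,d_4\}$, where $E=\{e_{uv}:uv\in E(J)\}$, and to close at $y'$. Only walks through the triangle cross between the two parity classes. The relevant $H$-edges are $d_6d_7$ and $d_5d_6,d_5d_7$ (via the triangle), $d_3d_6,d_3d_7$ (via $d_3d_4d_5d_7d_6$ and $d_3d_4d_5d_6d_7$), $d_4d_6,d_4d_7$ (length $2$), and $d_1d_6,d_1d_7,d_2d_6,d_2d_7$ (length $4$ via $d_5d_4d_3$). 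Also $x'd_5$ and $y'd_3$ are $H$-edges by (iii). I therefore aim for the closed route
\[
y'\ \cdots\ x',\ d_5,\ d_6,\ d_4,\ \pi,\ d_1,\ d_7,\ d_3,\ y',
\]
where $\pi$ is a Hamiltonian path of $H[E\cup\{d_1,d_2,d_4\}]$ from $d_4$ to $d_1$. This route contains $d_6d_7$ only if $d_6$ and $d_7$ are consecutive. The fix is to enter and leave the edge side through the same triangle vertex is impossible, so instead I reroute as
\[
x',\ d_5,\ d_7,\ d_6,\ d_4,\ \pi,\ d_1,\ \ldots
\]
and return to the node side through $d_1\to d_3$ if such an edge exists, or otherwise adjust the endpoints of $\pi$. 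If no rerouting works, I will keep $d_6d_7$ by using the triangle's third vertex $d_5$ for one of the two crossings.

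\textbf{Main obstacle.} The hard step is producing $\pi$, a Hamiltonian path of the edge-side graph with prescribed ends among $d_1,d_2,d_4$. On $E$, the graph $H$ is exactly $L(J)^2$, because line-graph distance at most $2$ corresponds to incidence-graph distance at most $4$. The vertex $d_4$ is adjacent to $d_1,d_2$. The vertex $d_1$ is adjacent to $d_2,d_4$ and to every $e_{xv}$, and $d_2$ likewise to every $e_{yv}$. My first attempt is to show that this graph contains the cube of a spanning tree and then invoke Hamiltonian-connectedness of graph cubes~\cite{Karaganis}. If that fails, I would use that $J$ here is the pendant-path graph of Lemma~\ref{lem:pendant-square}. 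In that case I would build $\pi$ explicitly by walking along the Hamiltonian cycle $C$ and, at each vertex $v$, sweeping its incident edges $e_{vw}$ together with the pendant edges $e_{va_v},e_{a_vb_v}$. Consecutive edges in this sweep share an endpoint or are at line distance $2$, so each step is an $H$-edge.
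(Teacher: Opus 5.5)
Your Step~1 is correct and is the paper's own parity-and-distance inspection. Step~2 has two gaps, and the second one is where the lemma's real content lies.

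\textbf{First gap: the splicing.} The route you display is a Hamiltonian cycle of $H$, but it avoids $d_6d_7$. Your reroute $x',d_5,d_7,d_6,d_4,\pi,d_1,\ldots$ cannot be completed. Here
\[
N_H(d_3)=\{u':u\in N_J[x]\cup N_J[y]\}\cup\{d_5,d_6,d_7\}
\]
contains no edge-side vertex, so $d_1d_3\notin E(H)$. Once $d_5,d_6,d_7$ are used, nothing leads back to the node side.

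The edge you missed is $d_5d_4\in E(H)$, witnessed by the walk $d_5d_6d_7d_5d_4$. It is the only $H$-edge between $N\cup\{d_3,d_5\}$ and $E\cup\{d_1,d_2,d_4\}$. A crossing-parity count therefore forces every Hamiltonian cycle through $d_6d_7$ to use it. With this edge, your own $\pi$ closes up as $y'\cdots x',d_5,d_4,\pi,d_1,d_6,d_7,d_3,y'$.

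\textbf{Second gap: the existence of $\pi$ is not established.} The graph $H[E\cup\{d_1,d_2,d_4\}]$ is exactly $L(J^+)^2$. Here $J^+$ adds to $J$ a vertex $w$ adjacent to $x$ and $y$, and a leaf $z$ at $w$; the vertices $d_3,d_5$ play the roles of $w,z$, and $d_1,d_2,d_4$ play $wx,wy,wz$.

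\emph{The cube-of-a-spanning-tree plan fails for general $J$.} Squares of line graphs need not be Hamiltonian-connected. Take $J$ to be a path $v_1\cdots v_m$ with $m\geq5$ and $xy=v_1v_2$. Then $J^2$ has a Hamiltonian cycle through $xy$. However, $v_{m-1}v_m$ has degree two in $L(J^+)^2$, so no Hamiltonian path joins its two neighbours. By Karaganis, no spanning-tree cube can lie inside this graph.

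\emph{The fallback proves less and is underspecified.} It assumes the pendant-path form of Lemma~\ref{lem:pendant-square}, which the lemma does not assume. The sweep is also unspecified exactly where care is needed:
\begin{itemize}
\item each $e_{vw}$ must be assigned to one of its endpoints;
\item $e_{a_vb_v}$ lies at line distance three from most edges at the next cycle vertex;
\item the prescribed ends $d_4$ and $d_1$ are never handled.
\end{itemize}

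\textbf{How the paper closes this step.} It uses \cite[Lemma~4.3]{ALM}: in a connected graph with at least three edges, the square of the line graph has a Hamiltonian cycle through any prescribed pair of edges joined by a length-two walk in the line graph. Applied in $J^+$ to $wz,wy$ (via $wx$), it gives a spanning $d_4$--$d_2$ path $W$. Then $(x',d_5,W,d_6,d_7,d_3,y',v_1',\ldots,v_t')$ is the required cycle, and it uses precisely the crossing $d_5d_4$.
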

\begin{proof}
For the three structural assertions, an even walk of length at most four
starting at a node cannot reach $d_6$ or $d_7$; each is at distance at
least five from $N$. In the remaining bipartite graph, the vertices on
the node side are exactly $N\cup\{d_3,d_5\}$. Length-two and length-four
walks through the incidence graph encode adjacency in $J^2$. The only
additional node-to-node route of length at most four is
$x',d_1,d_3,d_2,y'$, which adds no new pair because $xy\in E(J)$.
The same inspection gives the two displayed neighbourhoods.

For completeness, we record why the forward construction can retain
$d_6d_7$. Under the forward hypothesis, $J$ is connected. Add a vertex $w$ adjacent to $x,y$ in $J$, and a new leaf $z$
adjacent to $w$, obtaining $J^+$. Its incidence graph is the subgraph of
$B$ on all vertices except $d_6,d_7$, with $w,z$ represented by $d_3,d_5$
and the edges $wx,wy,wz$ by $d_1,d_2,d_4$. The line-graph square lemma
of~\cite[Lemma~4.3]{ALM} states that, in a connected graph with at
least three edges, the square of its line graph has a Hamiltonian cycle
containing any specified pair of distinct edges joined by a length-two
walk in the line graph. Apply it to $wz,wy$, using the intermediate edge
$wx$. Deleting the resulting closing edge gives a path $W$ in $H$ from
$d_4$ to $d_2$ spanning all incidence-edge vertices, including
$d_1,d_2,d_4$.

Write the given node cycle as $(x,y,v_1,\ldots,v_t)$. Then
\[
 (x',d_5,W,d_6,d_7,d_3,y',v_1',\ldots,v_t')
\]
is a Hamiltonian cycle in $H$. Here $W$ includes both its endpoints.
The new transitions are witnessed, respectively, by
\[
 \begin{gathered}
 x'd_1d_3d_4d_5,\quad d_5d_6d_7d_5d_4,\quad
 d_2d_3d_4d_5d_6,\\
 d_6d_5d_7,\quad d_7d_6d_5d_4d_3,\quad d_3d_2y'.
 \end{gathered}
\]
All have length two or four. The second witness deliberately repeats
$d_5$, as the definition permits walks rather than paths.
\end{proof}

\begin{proof}[Proof of Theorem~\ref{thm:twofour}]
We reduce from Hamiltonian Cycle on planar cubic graphs, which is
NP-complete~\cite{GJT76}. Let $F$ be such a source instance and construct
$J$ as in Lemma~\ref{lem:pendant-square}. Choose any $r\in V(F)$ and set
$x=a_r$, $y=b_r$. Construct $B=B(J,xy)$.

The graph $J$ is planar and has maximum degree four. Its incidence graph
is planar, and the additional gadget can be embedded in a face incident
with the pendant incidence path $x',e_{xy},y'$. The attachments increase
the node degrees at $x'$ and $y'$ to three and two, respectively; the
other node degrees are at most four, and all added vertices have degree
at most three. Thus $B$ is planar of maximum degree four. An embedding
can be chosen with $d_6$ on the outer face.

\begin{claim}\label{cl:node-run}
Let $W$ be a Hamiltonian cycle in $B_{\{2,4\}}$, or a Hamiltonian
path with neither endpoint in $N$. Then $N$ is a single run on $W$, and
its two external neighbours are $d_3,d_5$. Consequently, $J^2$ has a
Hamiltonian cycle.
\end{claim}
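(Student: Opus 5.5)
The plan is to count the runs of $N$ on $W$ directly, using Lemma~\ref{lem:incidence-gadget}(ii) in place of Lemma~\ref{lem:block-count}. The set $N$ is nonempty. It is also a proper subset of $V(B)$, since $d_6,d_7\notin N$. So $W$ visits $N$ in $r\geq1$ runs. No endpoint of $W$ lies in $N$: for a cycle this is automatic, and for a path it is the hypothesis. Hence each run $R$ has two boundary edges of $W$, one at each end, and each leads to a vertex outside $N$. By Lemma~\ref{lem:incidence-gadget}(ii), every such outside vertex is $d_3$ or $d_5$. Each of $d_3,d_5$ has at most two incident edges on $W$, so at most four boundary edges exist, and therefore $r\leq2$.

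To exclude $r=2$, suppose there are two runs. Then there are exactly four boundary edges, so both $W$-neighbours of $d_3$ and both $W$-neighbours of $d_5$ lie in $N$. Follow $W$ from one run through $d_3$ into a run, then through $d_5$ into a run, and so on. The union of the two runs with $d_3,d_5$ has every vertex of $W$-degree two inside this union. It is therefore a closed cycle contained in $W$.
\begin{itemize}
\item If $W$ is a path, this is impossible, since a path contains no cycle.
\item If $W$ is a cycle, this subcycle must be all of $W$. But it misses $d_6,d_7$, a contradiction.
\end{itemize}

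The same argument handles the degenerate case of one run whose two boundary edges both go to the same vertex, say $d_3$. Then that run and $d_3$ close into a cycle avoiding $d_5,d_6,d_7$, which is again impossible. So $N$ is a single run, and its two external neighbours on $W$ are exactly $d_3$ and $d_5$.

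For the consequence, the run is a Hamiltonian path of $H[N]$. By Lemma~\ref{lem:incidence-gadget}(i), $H[N]\cong J^2$, so this gives a Hamiltonian path of $J^2$. By Lemma~\ref{lem:incidence-gadget}(iii):
\begin{itemize}
\item the end adjacent to $d_5$ is $x'$ or $y'$, say $x'$ by symmetry;
\item the end adjacent to $d_3$ is some $u'$ with $u\in N_J[x]\cup N_J[y]$ and $u\neq x$.
\end{itemize}
Because $xy\in E(J)$, every such $u$ is at distance at most two from $x$, so $ux\in E(J^2)$. Adding this closing edge yields a Hamiltonian cycle of $J^2$. This is valid because $|V(J)|\geq3$.

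The main obstacle I expect is making the cycle-closing argument for $r=2$ airtight in the path case. In particular, one must verify that endpoints outside $N$ still force every run to have two boundary edges, and that the pairing of run ends through $d_3$ and $d_5$ really closes up rather than leaving a dangling end.
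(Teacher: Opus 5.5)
Your proposal is correct and follows essentially the same route as the paper: you bound the number of $N$-runs by the at most four $W$-edges at the interface $\{d_3,d_5\}$, rule out two runs (and a single run closed at one interface vertex) because $W[N\cup\{d_3,d_5\}]$ would contain a cycle, and then close the resulting Hamiltonian path of $J^2$ using Lemma~\ref{lem:incidence-gadget}(i),(iii) and $xy\in E(J)$. The closing-up concern you raised is already settled by your degree count: with two runs, every vertex of $N\cup\{d_3,d_5\}$ has both of its $W$-edges inside that set.
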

\begin{proof}
Every $N$-run needs two edges to the interface $I=\{d_3,d_5\}$.
There are at most four such edges. If there were two runs, all cycle or
path incidences of both interface vertices would be used by these runs;
$W[N\cup I]$ would then contain a cycle. This is impossible for a path,
and for a Hamiltonian cycle it would omit other vertices of $B$.
There is therefore one run. Its two external neighbours are distinct,
since attaching both ends to one interface vertex would again form a
proper cycle.

By Lemma~\ref{lem:incidence-gadget}, the node run is a Hamiltonian path
of $J^2$ with one endpoint in $\{x,y\}$ and the other in
$N_J[x]\cup N_J[y]$. The endpoints are distinct and at distance at most
two in $J$, since $xy\in E(J)$. Joining them closes a Hamiltonian cycle
of $J^2$.
\end{proof}

The cycle reduction now follows. A Hamiltonian cycle of $F$ gives one
in $J^2$ containing $xy$, and Lemma~\ref{lem:incidence-gadget} lifts it
to $B$. Conversely, Claim~\ref{cl:node-run} and
Lemma~\ref{lem:pendant-square} recover a Hamiltonian cycle of $F$.

For the path variants, take three disjoint copies $B_1,B_2,B_3$ of $B$
and add $d_6^1d_6^2$ and $d_6^2d_6^3$. Denote the result by $G$ and,
for the prescribed-endpoint variant, choose
$\alpha=d_6^1$ and $\beta=d_7^3$. The graph is planar, and its maximum
degree is still four: only $d_6^2$ receives two extra edges, increasing
its degree from two to four.

Every node of $B_i$ is at distance at least five from $d_6^i$. Hence an
even walk of length at most four starting at a node cannot use an added
edge. The three structural assertions of
Lemma~\ref{lem:incidence-gadget}, and Claim~\ref{cl:node-run}, therefore
remain valid for each node set $N_i$ in $G_{\{2,4\}}$.

If $F$ is Hamiltonian, choose in each copy the Hamiltonian cycle
containing $d_6^id_7^i$ supplied by the lifting lemma. Delete that edge
to obtain a spanning path from $d_6^i$ to $d_7^i$. Concatenate the three
paths using the length-two walks
$d_7^i,d_6^i,d_6^{i+1}$, for $i=1,2$. This gives an
$\alpha$--$\beta$ Hamiltonian path in the compatibility graph.
Conversely, any Hamiltonian path, regardless of its endpoints, has
neither endpoint in at least one of the three sets $N_i$.
Claim~\ref{cl:node-run} for that copy gives a Hamiltonian cycle of $J^2$,
and hence of $F$. All constructions have linear size.
\end{proof}

\section{Undirected Graphs: Algorithms and Classification}
\label{sec:undir-easy}

We give linear-time tests for all remaining undirected cases and
complete the classification in Table~\ref{tab:undirected}.

An allowed walk cannot join different connected components. Thus every
nontrivial spanning instance first requires connectedness. In this
section, graphs have at least three vertices; smaller instances are
handled directly.

\subsection{Even prefixes containing six}

We show that connectedness and non-bipartiteness characterise all three
problems for even prefixes containing six, resolving the
$\{2,4,6\}$ prescribed-endpoint case left open in~\cite{ALM}. The key observation is that the $E_6$-compatibility graph is the cube
of the common-neighbour graph $G_{\{2\}}$, which is connected exactly
when $G$ is connected and non-bipartite. Karaganis's theorem then gives
Hamiltonian-connectedness, yielding all three variants.

\begin{theorem}\label{thm:even-easy}
Suppose that $\operatorname{cl}_2(S)=E_h$ for even $h\geq6$, or for $h=\infty$.
For each of the three undirected $S$-Hamiltonian problems, an input
$G$ is a yes-instance if and only if it is connected and non-bipartite.
These tests take linear time.
\end{theorem}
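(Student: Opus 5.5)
The plan is to prove necessity by two obstructions and sufficiency through Karaganis's theorem. For necessity, an allowed walk never leaves a connected component, so any yes-instance on at least two vertices must be connected. If $G$ is bipartite, every walk of even length stays within its colour class. Then $G_S$ has no edge between the two classes. Since $G$ has at least three vertices and is connected, both classes are nonempty, so $G_S$ is disconnected and has no Hamiltonian path or cycle. This holds for any endpoints, so it rules out all three variants at once.

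For sufficiency, suppose $G$ is connected and non-bipartite, and put $C=G_{\{2\}}$, the graph joining distinct vertices that have a common neighbour. The first step is to show that $C$ is connected. Take an edge $uv$ of $G$; if $v$ has another neighbour $w$, then $u$ and $w$ are joined in $C$. More systematically, lift $G$ to its parity double cover. Non-bipartiteness plus connectedness means this cover is connected, so any two vertices $x,y$ are joined by an even walk. Consecutive vertices at even positions of that walk are equal or adjacent in $C$, so $x$ and $y$ are connected in $C$.

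The second step identifies the compatibility graph. Since undirected problems depend only on $\operatorname{cl}_2(S)=E_h$, we have $G_{E_h}\supseteq G_{E_6}$. Distinct $x,y$ are adjacent in $G_{E_6}$ exactly when some even walk of length $2j$ with $j\le3$ joins them, that is, when $\dist_C(x,y)\le3$. Hence $G_{E_6}=C^3$. It suffices to show that $G_{E_6}$ is Hamiltonian-connected, since adding edges preserves this.

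The third step applies Karaganis's theorem: the cube of a connected graph on at least three vertices is Hamiltonian-connected. We apply it to the spanning connected graph $C$, or to a spanning tree of it. Hamiltonian-connectedness yields an $s$--$t$ Hamiltonian path for every prescribed pair, hence also an unrestricted path. Applying it to an adjacent pair $s,t$ (any edge of $C^3$, which exists since $C$ is connected on at least three vertices) and closing with the edge $ts$ gives a Hamiltonian cycle on $n\ge3$ vertices.

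Finally, the linear-time claim is a single BFS that tests connectedness and 2-colourability. The identification $G_{E_6}=C^3$ needs the most care. The subtle point is that a length-$2j$ walk with $j\le3$ between distinct vertices passes through $C$-steps, some of which may be trivial (returning to the same vertex); those are simply dropped, giving a $C$-walk of length at most $3$. Conversely, a $C$-path of length at most $3$ lifts to an even $G$-walk of length at most $6$. With that settled, everything else is direct.
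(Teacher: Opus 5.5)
Your proposal is correct and follows essentially the paper's proof. Necessity comes from connectedness and the fact that even walks in a bipartite graph stay in one colour class; sufficiency comes from the identity $G_{E_6}=G_{\{2\}}^3$, the connectivity of $G_{\{2\}}$ for connected non-bipartite $G$, and Karaganis's theorem, with the cycle obtained by closing a Hamiltonian path between adjacent endpoints.
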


\begin{lemma}\label{lem:common-neighbour}
For every integer $r\geq1$,
$
 G_{E_{2r}}=G_{\{2\}}^r.
$
Moreover, if $|V(G)|\geq2$, then $G_{\{2\}}$ is connected if and only if $G$
is connected and non-bipartite.
\end{lemma}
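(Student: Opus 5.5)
We prove the two assertions of Lemma~\ref{lem:common-neighbour} separately, both by direct walk manipulations.

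\emph{First assertion: $G_{E_{2r}}=G_{\{2\}}^r$.} We show that, for distinct $x,y$, there is an $x$--$y$ walk in $G$ of even length at most $2r$ exactly when $\dist_{G_{\{2\}}}(x,y)\leq r$.

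Suppose first that $x=w_0,w_1,\ldots,w_{2j}=y$ is a walk in $G$ with $1\leq j\leq r$. Group it into consecutive pairs of steps. This gives the sequence $x=w_0,w_2,\ldots,w_{2j}=y$, in which each consecutive pair $w_{2i},w_{2i+2}$ is joined by a length-two walk. Whenever $w_{2i}\ne w_{2i+2}$, these two vertices are adjacent in $G_{\{2\}}$. Some consecutive terms may coincide, since a back-and-forth traversal returns to its start. Deleting these repetitions leaves an $x$--$y$ walk in $G_{\{2\}}$ of length at most $j\leq r$. Because $x\ne y$, this gives $1\leq\dist_{G_{\{2\}}}(x,y)\leq r$.

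Conversely, suppose $\dist_{G_{\{2\}}}(x,y)=j\leq r$. Concatenating the $j$ length-two witnesses along a shortest path in $G_{\{2\}}$ gives an $x$--$y$ walk in $G$ of length $2j\in E_{2r}$.

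The only subtle point here is the collapsing of repeated vertices. It is harmless because $G^r$ is defined through distance, and a walk of length at most $r$ certifies distance at most $r$.

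\emph{Second assertion: connectivity of $G_{\{2\}}$.} Assume $|V(G)|\geq2$.

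Suppose $G_{\{2\}}$ is connected. Every edge of $G_{\{2\}}$ comes from a walk in $G$, so $G$ is connected. Now suppose, for contradiction, that $G$ is bipartite. Then every even walk stays within one colour class. The two classes are both nonempty: $G$ is connected with at least two vertices, so it has an edge. This disconnects $G_{\{2\}}$, a contradiction.

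Conversely, suppose $G$ is connected and non-bipartite. Take any distinct $x,y$. We need an even $x$--$y$ walk in $G$, because the proof of the first assertion already turns any even walk of positive length into a walk in $G_{\{2\}}$. If $\dist_G(x,y)$ is even, a shortest path suffices; it has positive length since $x\ne y$. Otherwise, fix an odd cycle $C$ and a vertex $c$ on it. Walk from $x$ to $c$, traverse $C$ once, return from $c$ to $x$, and then follow a shortest $x$--$y$ path. The detour $x\to c\to x$ has even length and $C$ has odd length, so the resulting walk has parity $\dist_G(x,y)+1$, which is even. Hence all vertices lie in one component of $G_{\{2\}}$.

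We expect no real obstacle. The most delicate step is the parity adjustment through the odd cycle, which needs connectivity of $G$ to reach $C$.
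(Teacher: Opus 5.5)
Your proof is correct and follows essentially the same route as the paper. You take every second vertex of an even walk and collapse repetitions for one inclusion, concatenate length-two witnesses for the other, and use the bipartite colour classes and an odd-cycle parity detour for the connectivity equivalence; you are only slightly more explicit than the paper about why both colour classes are nonempty and about the parity count of the detour.
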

\begin{proof}
An even walk of length $2j\leq2r$ in $G$ yields a sequence of every
second vertex. Consecutive distinct vertices in that sequence are
adjacent in $G_{\{2\}}$; deleting consecutive repetitions gives a walk of
length at most $j$. Conversely, each edge of a path in $G_{\{2\}}$ can be
replaced by a length-two walk in $G$. This proves the equality.

If $G$ is disconnected, so is $G_{\{2\}}$. If $G$ is connected and bipartite,
no edge of $G_{\{2\}}$ joins its two nonempty parts. If $G$ is connected and
non-bipartite, every pair of vertices has an even walk: take any walk
between them, and, if necessary, insert a detour through an odd cycle to
change its parity. Taking every second vertex then joins the pair in
$G_{\{2\}}$.
\end{proof}

\begin{proposition}[Karaganis~\cite{Karaganis}]\label{prop:cube}
For every connected graph $G$ with at least three vertices, $G^3$ has
a Hamiltonian path between any prescribed pair of distinct vertices.
It therefore also has an unrestricted Hamiltonian path and a Hamiltonian
cycle: choose adjacent endpoints and close the spanning path with their
edge. Equivalently, $G$ admits all three $\{1,2,3\}$-Hamiltonian variants.
\end{proposition}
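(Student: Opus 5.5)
The plan is to reduce to spanning trees and then induct on the number of vertices. Since $G$ is connected, it has a spanning tree $T$, and $T^3$ is a spanning subgraph of $G^3$. It therefore suffices to prove the following statement $(\ast)$: for every tree $T$ with at least two vertices and every pair of distinct vertices $u,v$, the cube $T^3$ has a Hamiltonian $u$--$v$ path. I drop the hypothesis of at least three vertices in $(\ast)$ because the induction will produce subtrees with one or two vertices. The two-vertex case is a single edge, and singleton subtrees are handled separately in the step below.

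For the inductive step, let $P$ be the $u$--$v$ path in $T$ and let $ab$ be its first edge, with $a=u$. Deleting $ab$ splits $T$ into subtrees $T_u\ni u$ and $T_v\ni b,v$. Rather than fix $a=u$, it is cleaner to allow any edge $ab$ of $P$ with $a$ on the $u$-side. I then build a Hamiltonian path of $T_u^3$ that starts at $u$ and ends at a vertex $x$ with $\dist_T(x,a)\le1$, and a Hamiltonian path of $T_v^3$ that starts at a vertex $y$ with $\dist_T(y,b)\le1$ and ends at $v$. The two subtrees are distances-preserving subtrees of $T$, so these are paths in $T^3$. Concatenating them is legal because $\dist_T(x,y)\le 1+1+1=3$.

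The two pieces are constructed by cases, using induction on the strictly smaller subtrees.
\begin{itemize}
\item On the $u$-side with $u\ne a$, apply $(\ast)$ to $T_u$ to get a $u$--$a$ path, so $x=a$.
\item On the $u$-side with $u=a$ and $T_u=\{a\}$, take the trivial path, so $x=a$.
\item On the $u$-side with $u=a$ and $|T_u|\ge2$, choose a neighbour $c$ of $a$ in $T_u$ and apply $(\ast)$ to get an $a$--$c$ path. Then $x=c$ is adjacent to $a$.
\item The $v$-side is symmetric: start at $y=b$ if $b\ne v$ or $T_v=\{b\}$. Otherwise start at a neighbour of $b$ in $T_v$ and end at $v=b$.
\end{itemize}
In every case the endpoint conditions hold and all vertices are covered exactly once.

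The main point needing care is this endpoint bookkeeping. The junction must stay within distance three, which is why the induction must allow the internal endpoint to be moved one step away from $a$ or $b$. A naive choice could produce a junction of length four. I also need to check that the subtree applications always have two distinct prescribed endpoints, which is why the singleton cases are split off.

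Finally, for $|V(G)|\ge3$, the remaining consequences follow as stated in the proposition. Choose an edge $uv$ of $G$ and take a Hamiltonian $u$--$v$ path in $G^3$. Closing it with $uv$ gives a cycle on at least three vertices. An unrestricted Hamiltonian path is immediate. Since $G^3=G_{\{1,2,3\}}$ under the normalisation $\operatorname{cl}_2$, this yields all three $\{1,2,3\}$-Hamiltonian variants.
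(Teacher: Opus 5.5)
Your proof is correct, and it takes a different route from the paper only in that the paper gives no proof of the main claim. The paper imports the Hamiltonian-connectedness of $G^3$ from Karaganis as a citation. It adds only the one-line derivation of the cycle and the unrestricted path (take adjacent endpoints and close with their edge), which you reproduce exactly.

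What you supply in addition is a self-contained proof of the cited theorem, by the standard spanning-tree induction. You pass to a spanning tree $T$, which is legitimate since $\dist_G\le\dist_T$, and cut an edge $ab$ of the $u$--$v$ path. You then recurse on the two sides, letting an internal endpoint slide to a neighbour of $a$ or $b$ so that the junction satisfies $\dist_T(x,y)\le 3$. Your case split correctly ensures that every recursive call has two distinct endpoints in a strictly smaller tree, and that the base case is a single edge. The citation buys brevity; your version buys self-containment, and it shows that a Hamiltonian path can be found using only edges of $T^3$.

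Two cosmetic points:
\begin{enumerate}
\item If you fix $a=u$ as in your first sentence, the case $u\ne a$ never arises, so commit to one convention.
\item The identity $G_{\{1,2,3\}}=G^3$ holds directly, without appeal to $\operatorname{cl}_2$. The set $\{1,2,3\}$ is already normalised, and distinct vertices are joined by a walk of length at most three exactly when their distance is at most three.
\end{enumerate}
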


\begin{proof}[Proof of Theorem~\ref{thm:even-easy}]
If $G$ is connected and non-bipartite, Lemma~\ref{lem:common-neighbour}
and Proposition~\ref{prop:cube} show that
$
 G_{E_6}=G_{\{2\}}^3
$
is Hamiltonian-connected. The same holds for its supergraph
$G_{S}$. This gives both path variants. Choosing adjacent
endpoints of a Hamiltonian path and closing it gives a Hamiltonian
cycle. Conversely, disconnectedness prevents a spanning ordering, and
in a bipartite graph every even walk stays within one part.
Connectedness and bipartiteness can be tested in $O(|V(G)|+|E(G)|)$ time;
there is no need to construct the possibly dense compatibility graph.
\end{proof}

\subsection{Completing the classification}\label{sec:assembly}

Combining the preceding results with the inherited hardness cases gives
the complete dichotomy in Table~\ref{tab:undirected}: all three problems
are NP-complete in its hard rows and linear-time solvable otherwise.
It suffices to classify $\operatorname{cl}_2(S)$, as explained in the introduction.
The finite odd-prefix cases follow from Theorem~\ref{thm:odd-hard},
and the $\{2,4\}$ cases from Theorem~\ref{thm:twofour}.
For $\{2\}$, the cycle and prescribed-endpoint hardness results are
established in~\cite[Theorems~3.5 and~4.1]{ALM}, and the unrestricted
path result in~\cite[Theorem~E.2]{ALMfull}. The $\{1,2\}$ cycle result
is classical~\cite{Underground}; both path variants are covered
in~\cite[Appendix~E.2]{ALMfull}. These are many-one hardness results.

For exhaustiveness, an odd-only nonempty normalised set is $O_k$ for
finite odd $k$ or $O_\infty$. An even-only nonempty normalised set is
$E_2$, $E_4$, or contains $E_6$. A mixed normalised set either equals
$\{1,2\}$, contains $\{1,2,3\}$, or has the form $\{1\}\cup E_h$ with
$h\geq4$, including infinity. The remaining cases are therefore covered
by Theorem~\ref{thm:even-easy} and the known odd-only and mixed-parity
tractability results~\cite{ALM,ALMfull}. The empty set permits no transition between
distinct vertices and is trivial. All tractable tests are linear-time.
An infinite $S$ has an infinite parity prefix and cannot belong to the
hard rows of Table~\ref{tab:undirected}.

\section{Directed Graphs: Finite Length Sets}
\label{sec:dir-finite}

We answer the finite directed-set question raised in~\cite{ALM}: every
nonempty finite set gives NP-completeness.

\subsection{Hardness on oriented graphs}

Every fixed finite nonempty set $S$ gives NP-completeness for all three
directed problems, even without loops or antiparallel arcs.

\begin{theorem}\label{thm:directed-finite}
For every fixed finite nonempty $S\subseteq\Npos$, all three directed
$S$-Hamiltonian problems are NP-complete on oriented graphs.
\end{theorem}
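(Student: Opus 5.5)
Membership in NP was settled in Section~\ref{sec:preliminaries}, so I only need hardness. I would mimic the proof of Theorem~\ref{thm:odd-hard}, reducing from directed Hamiltonian Cycle and directed Hamiltonian Path with prescribed endpoints, both NP-complete. Let $F$ be a source digraph and put $k=\max S$. The plan is to replace each source arc $uv$ by a private directed corridor $P_{uv}$ of length exactly $k$ from an out-terminal of $u$ to an in-terminal of $v$. Each source vertex $v$ becomes a block $Q_v=M_v\mathbin{\dot\cup}D_v$ whose compatibility digraph is forced to be ``bipartite-like'', so that Lemma~\ref{lem:block-count} applies. The essential point is to use directedness so that walks have no freedom of length. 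On an oriented graph that is acyclic locally, a walk of length $\ell$ is pinned to positions exactly $\ell$ apart along corridors. So instead of parity I would use ``layers modulo $k$''.

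Concretely, I would give the gadget of $v$ the following shape. Take a hub $D_v$ of vertices, each with arcs from the in-terminals and arcs to the out-terminals of $v$. Arrange all the inside corridor vertices at distances from the hub, so that the only pairs at exact walk distance in $S$ from a hub vertex are a designated set $M_v$, and conversely into the hub. The natural choice is to make every arc from a hub reach its targets in one step and to pad with directed paths of length $k$. Then only lengths $s\in S$ reach the vertex at depth $s$, and each $s\in S$ yields a different depth. I would take $|D_v|=|M_v|-1$ (plus one at prescribed endpoints), and check two properties of $H=G_S$, as in Claim~\ref{cl:odd-structure}. First, every vertex of $D_v$ has all in- and out-neighbours in $M_v$, and $D_v$ is independent. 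Second, arcs of $H$ between different blocks only go from $M_u$ to $M_v$ with $uv\in A(F)$. Corridors of length $\geq k+1$ between distinct hubs guarantee this second property, because no walk of length $\le k$ crosses a whole corridor except at its designated terminal pair. Orientation without antiparallel arcs and without loops holds if corridors are genuine paths and hubs only receive arcs in one direction per neighbour. Then Lemma~\ref{lem:block-count} with $e_D=0$ forces each $Q_v$ to be a single run on any Hamiltonian cycle of $H$, and contracting runs gives a Hamiltonian cycle of $F$. For paths, the extra hub vertex at $s$ and $t$ forces endpoints into $D_s,D_t$, exactly as in the undirected path argument. This also handles the unrestricted variant.

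For the forward direction I need each block $H[Q_v]$ to have a spanning directed path, alternating between $M_v$ and $D_v$, from any in-terminal region to any out-terminal region. I would ensure this by making every vertex of $M_v$ both an in- and an out-neighbour of every hub vertex in $H$. This requires a directed cycle structure inside the gadget so that walks of a length in $S$ exist both ways; the simplest device is a single directed cycle through the hub of length chosen coprime-ish to make the needed lengths available. The main obstacle is exactly this tension. The gadget must contain directed cycles so that hub and $M_v$ are mutually reachable with lengths in $S$. Yet those same cycles must not create spurious $S$-length walks inside $M_v$ (which would break $z=0$ in Lemma~\ref{lem:block-count}) or between blocks. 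I would first try a design in which all cycles inside a gadget have length exactly $k+1>\max S$. Then no walk of length $\le k$ can wrap around, and exact distances along the cycle determine $H$ completely. Finally I would verify independence of $M_v$ by a layering argument: assign vertices of $M_v$ positions $\equiv$ fixed residues so that differences never lie in $S$.
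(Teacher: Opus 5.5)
There is a genuine gap. The vertex gadget carries the whole reduction, yet it is never actually constructed, and the concrete commitments you do make cannot be realised for directed walks.

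\emph{Corridors.} Private corridors do not transfer from the undirected proof. An internal vertex $x$ of $P_{uv}$ can be reached from $D_u$ and can reach $D_v$. However, every walk from $x$ to $D_u$, or from $D_v$ to $x$, must pass through another block and so is longer than $k$. Lemma~\ref{lem:block-count} with $|D_v|=|M_v|-1$ and $e_D=0$ gives $r=1-z$. Hence every block run in a Hamiltonian cycle of $H$ alternates $m,d,m,\ldots,d,m$, and each interior $M$-vertex needs a hub predecessor and a hub successor. Only the first and last vertex of a run can be such a one-way vertex. That cannot accommodate the internal corridor vertices in general, so the forward direction breaks down. The undirected trick of walking back and forth along a corridor has no directed analogue.

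\emph{Cycles of length $k+1$.} Your fallback design cannot meet your own requirement that each $m\in M_v$ be both an in- and an out-neighbour of each $d\in D_v$ in $H$. The two witnessing walks form a closed walk of length $\ell_1+\ell_2\leq 2k$ with $\ell_1,\ell_2\in S$. This is too short to leave the gadget, so it is a positive multiple of $k+1$, forcing $\ell_1+\ell_2=k+1$. For $S=\{k\}$ with $k\geq2$, for instance $S=\{2\}$, this is impossible.

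\emph{Independence of $M_v$.} This requirement is unnecessary, since $r=1-z\geq1$ already forces $z=0$. It can also be contradictory. Take $S=\{1,2\}$ together with your mutual-adjacency requirement. Any length-two witness $d\to w\to m$ or $m\to w\to d$ puts $w$ in $M_v$ and creates an arc inside $M_v$. So both $dm$ and $md$ would have to be arcs of $G$, which an oriented graph forbids.

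The missing idea is to drop corridors and build slack, not rigidity, into each block; this is what the paper does. Join blocks by single arcs. Let $M_v$ be a stack of $k$ layers, each a directed triangle, with all arcs from each layer to the next. Let $D_v$ have $3k-1$ vertices, each sending arcs to every vertex of the first layer and receiving arcs from every vertex of the last. Each source arc $uv$ becomes one arc from the last layer of $u$ to the first layer of $v$. The triangles let every $d\in D_v$ reach, and be reached from, every $m\in M_v$ by walks of length exactly $k=\max S$; no other element of $S$ is ever used. The depth $k$ makes every walk from $D_v$ to a vertex outside $M_v$, and every walk using two inter-block arcs, longer than $k$. The set $M_v$ is then far from independent in $H$, which Lemma~\ref{lem:block-count} tolerates.

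Two smaller points also need repair. For $S=\{1\}$ your mutual-adjacency requirement needs antiparallel arcs, so this case must be taken directly from an oriented source. For the unrestricted path variant, knowing that the endpoints lie in $D_s$ and $D_t$ does not fix the direction, so a Hamiltonian $t$--$s$ path of $F$ would also yield a yes-instance. The paper avoids this by reducing from oriented instances in which $s$ has in-degree zero and $t$ has out-degree zero. That forces $Q_s$ to be an initial segment and $Q_t$ a final segment, with no extra hub vertices needed.
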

\begin{proof}
We first specify the source problems. Directed Hamiltonian Cycle is
NP-complete~\cite{Karp1972}, even after restricting the source to oriented
graphs: replace each vertex $v$ by $v^-\to v^0\to v^+$ and each arc
$uv$ by $u^+v^-$. Every Hamiltonian cycle traverses each triple
consecutively, so contraction and expansion prove equivalence.
For paths, split a vertex $r$ of an oriented cycle instance into a source
$s$ and a sink $z$: replace $rv$ by $sv$ and $ur$ by $uz$. Cutting and
closing at $r$ proves NP-hardness of Hamiltonian $s$--$z$ Path on
oriented graphs satisfying $d^-(s)=d^+(z)=0$. Every Hamiltonian path of
such a graph necessarily starts at $s$ and ends at $z$.

Put $t=\max S$. The case $t=1$ is settled by these source reductions,
so assume $t\geq2$. Let $F$ be the appropriate oriented source instance.

\paragraph*{Construction.}
For each $v\in V(F)$, introduce $t$ layers
$L_{v,i}=\{x_{v,i}^0,x_{v,i}^1,x_{v,i}^2\}$, for $0\leq i<t$.
Orient each layer as a directed triangle, and add all arcs from every
layer to the next. Put $M_v=\bigcup_i L_{v,i}$, so $|M_v|=3t$.
Add a set $D_v$ of $3t-1$ vertices, all arcs from $D_v$ to $L_{v,0}$,
and all arcs from $L_{v,t-1}$ to $D_v$. Write
\[
 Q_v=M_v\mathbin{\dot\cup}D_v,\qquad
 a_v=x_{v,0}^0,\quad b_v=x_{v,0}^1,\quad \tau_v=x_{v,t-1}^0.
\]
For each source arc $uv$, add $\tau_u a_v$. There are no other arcs.
Call the result $G$. It is oriented: first and last layers are distinct,
all inter-layer arcs point forward, and no added inter-block arc has its
reverse. Its size is linear in the source size for fixed $S$.
For the prescribed-endpoint variant choose $\alpha=a_s$, $\beta=b_z$;
the unrestricted-endpoint variant uses the same $G$.

Figure~\ref{fig:directed-layers} shows one vertex gadget and an outgoing
arc corresponding to a source arc.

\begin{figure}[ht]
\centering
\begin{tikzpicture}[>=Stealth, font=\small,
 vertex/.style={circle,draw,inner sep=1.7pt},
 bundle/.style={->,thick}]
\foreach \x/\i in {0/0,2.6/1,5.2/2} {
 \node[vertex] (p\i) at (\x,0.65) {};
 \node[vertex] (q\i) at (\x-0.45,-0.15) {};
 \node[vertex] (r\i) at (\x+0.45,-0.15) {};
 \draw[->] (p\i)--(q\i);
 \draw[->] (q\i)--(r\i);
 \draw[->] (r\i)--(p\i);
 \node at (\x,1.2) {$L_{v,\i}$};
}
\node[left] at (q0.west) {$a_v$};
\node[below] at (r0.south) {$b_v$};
\node[below] at (r2.south) {$\tau_v$};
\draw[bundle] (r0)--node[above,font=\scriptsize]{all $9$ arcs}(q1);
\draw[bundle] (r1)--node[above,font=\scriptsize]{all $9$ arcs}(q2);
\node[draw,rounded corners,minimum width=1.2cm,minimum height=1.8cm] at (2.6,-1.95) {};
\node[vertex] (d0) at (2.6,-1.3) {};
\node at (2.6,-1.95) {$\vdots$};
\node[vertex] (d1) at (2.6,-2.6) {};
\node[right] at (3.25,-1.95) {$D_v$: $3t-1$ vertices};
\foreach \d in {d0,d1} {
 \draw[bundle] (\d) to[bend left=12] (q0);
 \draw[bundle] (r2) to[bend left=12] (\d);
}
\node at (.9,-1.4) {\scriptsize all arcs};
\node at (4.35,-1.4) {\scriptsize all arcs};
\node[vertex,label=above:{$a_w$}] (next) at (7.2,-.15) {};
\draw[->] (r2) to[bend left=18] node[above,font=\scriptsize]{$vw\in A(F)$} (next);
\end{tikzpicture}
\caption{The directed vertex gadget, illustrated for $t=3$. Each layer
is a directed triangle. Each thick arrow represents all arcs between the
indicated sets: $D_v$ to the first layer, each layer to the next, and the
last layer to $D_v$; connections to both displayed vertices of $D_v$
are shown, and the same connections apply to every omitted vertex. The thin outgoing arc is $\tau_v a_w$.}
\label{fig:directed-layers}
\end{figure}

Let $H=G_S$.
\begin{claim}\label{cl:directed-structure}
For every $d\in D_v$, $N_H^+(d)=N_H^-(d)=M_v$. Every inter-block arc
of $H$ goes from $M_u$ to $M_v$ for some $uv\in A(F)$, and every such
source arc gives $b_u a_v\in A(H)$. Finally, for any distinct $a,b\in M_v$,
$H[Q_v]$ has a spanning directed $a$--$b$ path.
\end{claim}
\begin{proof}
Fix $y\in L_{v,j}$. To walk from $d$ to $y$ in exactly $t$ steps,
enter the first layer, take $t-j-1$ steps around its triangle, and
cross $j$ layer boundaries. The entry vertex can be chosen freely, and
complete forward adjacency allows the prescribed final vertex.
For a walk from $y$ to $d$, cross the remaining $t-1-j$ boundaries,
take $j$ steps around the last triangle, and move to $d$.
Both walks have length $t\in S$.

A walk from $D_v$ to a vertex outside $M_v$, or from outside $M_v$ to
$D_v$, needs at least $t+1$ steps: it must pass through the entire
sequence of layers and use an additional incident arc. Thus the stated
neighbourhoods are exact. A walk of length at most $t$ cannot use two
inter-block arcs, since between them it must cross $t-1$ layer boundaries.
Every inter-block compatibility arc therefore follows one source arc.
Conversely, walking from $b_u$ through the layers to $\tau_u$ and then
to $a_v$ takes exactly $t$ steps when $uv\in A(F)$.
Finally, the complete adjacency in both directions between $D_v$ and
$M_v$, with $|D_v|=|M_v|-1$, gives the required spanning alternating
path between any two distinct vertices of $M_v$.
\end{proof}

\paragraph*{Cycle correctness.}
A Hamiltonian cycle of $F$ expands into one of $H$ by replacing each
$v$ with a spanning $a_v$--$b_v$ path in its block and using the arcs
$b_u a_v$ between consecutive blocks. Conversely, in any Hamiltonian
cycle of $H$, Lemma~\ref{lem:block-count} makes every block a single run.
This use of the lemma does not require $M_v$ to be independent.
Contracting the runs gives a Hamiltonian cycle of $F$ by
Claim~\ref{cl:directed-structure}.

\paragraph*{Path correctness.}
A Hamiltonian $s$--$z$ path of $F$ expands in exactly the same way to a
Hamiltonian $\alpha$--$\beta$ path of $H$. Conversely, consider any
Hamiltonian path of $H$. No inter-block arc enters $Q_s$, and none
leaves $Q_z$. Thus the path traverses all of $Q_s$ as its initial segment
and all of $Q_z$ as its final segment. Neither global endpoint lies in
any other block,
so Lemma~\ref{lem:block-count} makes every remaining block a single run.
Contraction gives a Hamiltonian $s$--$z$ path of $F$. This proves the
reverse implication even without prescribed target endpoints.
\end{proof}

\section{Directed Graphs: Infinite Length Sets}
\label{sec:dir-infinite}

Infinite directed sets admit linear-time solvability, NP-completeness,
or undecidability, as the three constructions below show.

\subsection{Cofinite sets: linear-time characterisations}

For cofinite sets, strong connectivity characterises cycles, and a
condition on the condensation characterises paths, giving linear-time
algorithms for all three problems.

A set $S\subseteq\Npos$ is \emph{cofinite} if it contains every sufficiently
large integer. The \emph{condensation} is the acyclic digraph obtained by
contracting each strongly connected component and deleting loops.

\begin{theorem}\label{thm:cofinite}
Fix a cofinite set $S$, and let $G$ be a digraph.
\begin{enumerate}[(i)]
\item For $|V(G)|\geq3$, $G$ has an $S$-Hamiltonian cycle if and only if
it is strongly connected.
\item For $|V(G)|\geq2$, take any topological ordering
$C_1,\ldots,C_k$ of its strongly connected components. An unrestricted
$S$-Hamiltonian path exists if and only if, for every $i<k$, the
condensation has the arc $C_iC_{i+1}$ and, whenever both components consist of
a single vertex without a loop, $1\in S$.
\item A path with distinct prescribed endpoints $a,b$ exists if and
only if the conditions in (ii) hold and $a\in C_1$, $b\in C_k$.
\end{enumerate}
All three problems have linear-time decision algorithms.
\end{theorem}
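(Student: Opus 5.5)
The key fact is that for cofinite $S$, the compatibility digraph $G_S$ has a very clean description. Let $N$ be such that $S\supseteq\{N,N+1,\dots\}$.

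\textbf{Compatibility inside a component.} Consider two distinct vertices $x,y$ in the same nontrivial strongly connected component $C$, meaning $C$ contains a cycle. I claim $G$ has $x$--$y$ walks of every sufficiently large length. Let $g$ be the period of $C$, the gcd of its cycle lengths. Inside $C$ the walk lengths from $x$ to $y$ form an eventually periodic set with period $g$. So I must check whether cofinite $S$ always hits it.

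It does. That set contains all sufficiently large integers in one fixed residue class modulo $g$, and cofinite $S$ contains every sufficiently large integer. Hence $xy\in A(G_S)$ whenever $y$ is reachable from $x$.

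This argument extends to any pair with $y$ reachable from $x$, as long as some walk from $x$ to $y$ passes through a vertex lying on a cycle. In that case walks of arbitrarily large lengths exist in a fixed residue class, and $S$ catches one of them.

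Now suppose every $x$--$y$ walk avoids cycles, for example when the route runs through trivial components only. Then there are only finitely many walk lengths, and $xy\in A(G_S)$ requires one of them to lie in $S$, which is not automatic. This is exactly the reason for the condition ``$1\in S$'' in (ii). Between consecutive loopless singleton components $C_i,C_{i+1}$ joined by an arc, the only walk from $C_i$ to $C_{i+1}$ is the direct arc, of length $1$, unless it detours. A detour would need to return to $C_{i+1}$ after leaving, which is impossible in a DAG ordering. Therefore compatibility holds iff $1\in S$.

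\textbf{Part (i).} If $G$ is strongly connected with $|V(G)|\ge3$, it has a cycle, so every ordered pair of distinct vertices is compatible. Then $G_S$ is complete, and any cyclic ordering works. Conversely, an $S$-Hamiltonian cycle gives walks around all vertices, so $G$ is strongly connected.

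\textbf{Part (ii), sufficiency.} Order the components $C_1,\dots,C_k$ and traverse each one consecutively, in any internal order; within a nontrivial component $G_S$ is complete. For each transition, take the last vertex of $C_i$ and the first vertex of $C_{i+1}$, using the arc $C_iC_{i+1}$. If either component is nontrivial, or is a singleton with a loop, a walk through it can be padded to any large length, so the transition is compatible by cofiniteness. Otherwise the transition needs $1\in S$.

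\textbf{Part (ii), necessity.} Reachability in $G_S$ refines reachability in $G$. So a Hamiltonian path in $G_S$ visits the components in a topological order that must be a linear order of the condensation. Any vertex-ordering must respect this order, since $G_S$ has no arcs going backwards. The path cannot leave a component and return, so each component is one run.

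A consecutive pair of components in the linear order must be joined by some arc of the condensation. This holds because the last vertex of $C_i$ reaches the first vertex of $C_{i+1}$ via a walk, and any intermediate component would have to lie strictly between them in the unique topological order. Hence the condensation contains a Hamiltonian path, so the topological order is unique and the arc $C_iC_{i+1}$ is forced. For two loopless singletons, the only connecting walk is that arc, so $1\in S$ is needed.

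\textbf{Part (iii).} The argument adapts directly. The endpoints must lie in $C_1$ and $C_k$. Conversely, within $C_1$ and $C_k$ the complete compatibility lets us start at $a$ and end at $b$; if $k=1$ this needs $a\ne b$, which holds.

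\textbf{Algorithm and main obstacle.} Computing the strongly connected components, a topological order, and the arc checks takes linear time. The step I expect to be the main obstacle is the careful handling of the ``some walk meets a cycle'' cases. One such case is a singleton component with a loop, which allows arbitrary padding. I would treat these uniformly by noting that any walk through a vertex on a closed walk yields all lengths in a residue class beyond some bound.
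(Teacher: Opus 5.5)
Your proposal is correct and follows the paper's proof essentially step by step. It covers completeness of $G_S$ for strongly connected inputs in (i); components as consecutive blocks in the forced topological order, with direct condensation arcs and the $1\in S$ condition for loopless singletons, in (ii); and the same endpoint adjustment and linear-time component computation. The only difference is that you use the period of a component and eventual periodicity of walk lengths, whereas the paper's padding lemma just notes that repeating a cycle makes the available walk lengths unbounded, which already meets any cofinite $S$.
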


\begin{lemma}[Cofinite padding]\label{lem:cofinite-padding}
Let $S$ be cofinite. If a directed walk from $x$ to $y$ can pass through
a directed cycle, then a directed $x$--$y$ walk has length in $S$.
\end{lemma}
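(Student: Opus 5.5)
Since $S$ is cofinite, fix $N$ such that every integer $\ell\geq N$ lies in $S$. The plan is to pad a given walk by going around the cycle until its length reaches $N$.

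Let $P$ be a directed $x$--$y$ walk of length $p$ that passes through a vertex $c$ on a directed cycle $C$ of length $q\geq1$. Here a loop counts as a cycle of length one, and a digon as a cycle of length two, as the preliminaries allow.

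Split $P$ at some visit to $c$ into an $x$--$c$ walk $P_1$ and a $c$--$y$ walk $P_2$. For each integer $j\geq0$, concatenate $P_1$, then $j$ traversals of $C$ starting and ending at $c$, then $P_2$. This gives a directed $x$--$y$ walk of length $p+jq$. Each piece respects arc directions, and walks may repeat vertices, so the concatenation is a legitimate walk.

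Choose $j=N$. Since $q\geq1$, the length is $p+Nq\geq N$. Hence this length belongs to $S$, which proves the lemma.

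The only step that needs care is the meaning of ``can pass through a directed cycle''. I read it as: some $x$--$y$ walk visits a vertex lying on a directed cycle. Equivalently, some vertex $c$ on a cycle is reachable from $x$ and can reach $y$. Under that reading, the concatenation above is exactly the required construction. The only caveat is that the walk must be nonempty when $x=y$; this holds automatically once $j\geq1$.
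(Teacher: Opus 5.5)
Your proof is correct and is essentially the paper's argument: insert repeated traversals of the cycle, so the lengths become unbounded and eventually exceed the threshold of the cofinite set $S$. Splitting the walk at a vertex of the cycle and choosing $j=N$ explicitly just spells out the paper's one-line padding step.
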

\begin{proof}
Insert arbitrarily many traversals of that cycle. The resulting walk
lengths are unbounded, so some lie beyond the threshold of $S$.
\end{proof}

\begin{proof}[Proof of Theorem~\ref{thm:cofinite}]
For (i), concatenating the witnessing walks of an $S$-Hamiltonian cycle
shows that every vertex can reach every other. Conversely, in a strongly
connected digraph of order at least two, every ordered pair can be joined
through a directed cycle. By Lemma~\ref{lem:cofinite-padding},
$G_S$ is the complete loopless digraph, proving the assertion.

For the necessity in (ii), let $W$ be an $S$-Hamiltonian path. Once its
ordering leaves a strongly connected component, it cannot return: the
two relevant witnessing walks would put the intervening vertices in
that component. Thus the components occur as consecutive blocks.
Concatenating witnesses shows that each earlier component reaches every
later one. Consequently the condensation has a unique topological
ordering, namely the block order. Consecutive components in this order
must have a direct arc, since no intermediate component can lie between
them. If two consecutive components are single vertices $\{u\}$
and $\{v\}$ without loops, the only possible $u$--$v$ walk is the single arc: a longer
walk would either traverse an intermediate component or create a cycle
in one of the two singleton components. Hence $1\in S$ is necessary.
The first and last components must contain $a,b$ in the prescribed case.

Conversely, suppose the stated conditions hold. List the components in
the given order. Within every component containing at least two vertices, any pair of distinct
vertices is compatible by Lemma~\ref{lem:cofinite-padding}, so its
vertices can be ordered arbitrarily. Between consecutive components,
any chosen last and first vertices can be joined by a directed walk
using the condensation arc. If either component contains at least two vertices or a loop, this walk
can pass through a directed cycle and can be padded to an allowed
length. Otherwise both components are single vertices without loops and the transition is
a single arc, allowed by $1\in S$. Concatenating these orders gives the
required path. In (iii), start the first component at $a$ and end the
last at $b$; when $k=1$, choose both ends of that component accordingly.

Compute the strongly connected components, record their sizes and any
loops, and topologically sort the condensation. A scan of the arcs
marks every pair of consecutive components that is joined by an arc.
Checking these marks, the component sizes and loops, and the endpoints
takes $O(|V(G)|+|A(G)|)$ time. No transitive closure or enumeration of
endpoint pairs is needed.
\end{proof}

\subsection{An Infinite Length Set Giving NP-Completeness}

The fixed infinite set $S_0=\{2\}\cup O_\infty$ gives NP-completeness
for all three directed problems, even on strongly connected bipartite oriented graphs.

\begin{theorem}\label{thm:infinite-hard}
For the fixed infinite set $S_0=\{2\}\cup O_\infty$, all three directed
$S_0$-Hamiltonian problems are NP-complete, even on strongly connected
bipartite oriented graphs.
\end{theorem}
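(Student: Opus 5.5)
The plan is to read $G_{S_0}$ off the bipartite structure and then reduce from a directed Hamiltonicity problem whose hardness is already established. I have a definite outline for membership and for the structure of $G_{S_0}$. I do not yet have a verified gadget for the hardness step, and I flag the unresolved point below.

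\emph{Membership in NP.} $S_0$ is infinite, so this is not covered by the membership argument of Section~\ref{sec:preliminaries}. Nevertheless $G_{S_0}$ is computable in polynomial time. An $x$--$y$ arc for the length $2$ is a direct check. An $x$--$y$ arc for $O_\infty$ exists exactly when $(y,1)$ is reachable from $(x,0)$ in the directed parity lift $V(G)\times\{0,1\}$, with arcs $(u,i)\to(v,1-i)$ for $uv\in A(G)$. A certificate is again an ordering of $V(G)$, checked in $G_{S_0}$.

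\emph{Structure of $G_{S_0}$.} Let $G$ be strongly connected and bipartite with sides $X$ and $Y$. Every directed walk alternates sides, so a walk from $X$ to $Y$ (or from $Y$ to $X$) has odd length, and a walk within one side has even length. Strong connectivity then makes every cross pair compatible in both directions. A same-side pair $(x,x')$ is compatible exactly when $G$ has a length-two walk $x\to w\to x'$. So $G_{S_0}$ is the complete bidirected bipartite digraph between $X$ and $Y$, together with, inside each side, the union over $w$ on the other side of the bicliques $N^-(w)\times N^+(w)$. The whole difficulty is that cross transitions are free. The reduction must therefore make one side, say $Y$, carry the hard instance, and make $X$ so small or so rigid that few cross transitions are available.

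\emph{Reduction.} I would reduce from the oriented source problems already used for Theorem~\ref{thm:directed-finite}, or from that theorem itself with $S=\{2\}$. I would encode each source vertex $v$ by a block on side $Y$, in the manner of Lemma~\ref{lem:block-count}, with the within-$Y$ compatibility realised by hub vertices on side $X$. A hub $w$ with in-neighbourhood $\{\tau_u\}$ and out-neighbourhood $\{a_v\}$ encodes the source arc $uv$. The $D$-type vertices of each block sit on $Y$, and their length-two partners run through private hubs. I would then add one extra pair of vertices to make the graph strongly connected while keeping it bipartite and oriented.

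\emph{Correctness via run counting.} Every $X$-vertex of a Hamiltonian cycle or path lies in some $X$-run, and every $Y$-run is bounded by cross transitions. The number of $Y$-runs is at most the number of $X$-runs, hence at most $|X|$. I would arrange that the hubs are independent in the length-two digraph on $X$: a hub has in- and out-neighbours only in $Y$, and no length-two walk joins two hubs except through $Y$-vertices belonging to a single block. Then the $X$-vertices are forced into singleton runs, and a weighted version of Lemma~\ref{lem:block-count}, applied to each block with hubs counted in $M$, should show that the cross transitions cost exactly what the gadgets can pay. Each block is then a single run and the inter-block moves follow source arcs.

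\emph{Main obstacle.} Because all cross pairs are compatible, a cycle can hop $Y\to X\to Y$ between two arbitrary vertices. That hop acts as a wildcard unless the budget of $X$-vertices is exactly consumed by the intended traversal. The first thing I would try is making $|X|$ equal to the number of forced hub visits: for instance, one hub per source vertex that lies on the expansion path of that block, so that every $X$-vertex must appear between two specific $Y$-vertices. The counting identity would then leave no spare cross transitions. Whether this budget argument actually closes, so that no spare wildcard hop remains, is the step I expect to be hardest and have not checked.
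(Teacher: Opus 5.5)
Your NP-membership argument and your description of $G_{S_0}$ on strongly connected bipartite inputs are correct and agree with the paper. The gap is the hardness reduction. The obstacle you flag is not a detail left to check: it defeats the block-gadget design as proposed.

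\textbf{Why the block gadgets cannot work.} By your own observation, in $H=G_{S_0}$ every vertex of $X$ is joined in both directions to every vertex of $Y$. Hence no hub can be forced to sit between two specific $Y$-vertices. Lemma~\ref{lem:block-count} also does not apply to your $Y$-blocks. Its hypothesis that all neighbours of $D$ lie in $M$ fails, because each $D$-vertex is adjacent to all of $X$. Counting hubs in $M$ would put every hub into every block.

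Making the hubs independent in $H[X]$ makes things worse, not better. Every $X$-vertex becomes a singleton run, so a Hamiltonian cycle has exactly $|X|$ runs on $Y$: at least one free break per source arc, plus your private hubs. The $Y$-part of the ordering is then just a cover of $H[Y]$ by exactly $|X|$ paths, joined arbitrarily through the hubs. In that situation a cycle exists if and only if $\pc(H[Y])\le|X|\le|Y|$. With a budget of at least $|A(F)|$ paths, the free path ends can land in the $D$-sets, and nothing ties the answer to the Hamiltonicity of the source.

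\textbf{The missing idea: budget the breaks instead of eliminating them.} The paper puts the source on one side. It takes the symmetric orientation $F$ of a connected undirected Hamiltonian Path instance and subdivides each arc by its own vertex on the other side; these are exactly your hubs. Then $H[X_h]$ contains $F$.

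The $q=|Y|$ vertices of the other side cut the $X_h$-part of any Hamiltonian cycle into at most $q$ runs. For an $a$--$b$ path the bound is $q-1$, and for an unrestricted path it is $q+1$. These runs form a path cover of $H[X_h]$. Conversely, any cover with exactly that many paths can be interleaved with singleton vertices of $Y$, since all cross transitions exist.

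The hard side is then padded by a star whose cost on the breaking side is only $O(1)$ vertices. It has a centre $c$ and $h$ leaves $\ell_i$, realised by $\ell_i\to a\to c\to b\to\ell_i$. It is attached to $F$ only at $r$, via $c\to p\to r\to s\to c$. This gives $\pc(H[X_h])=h-1+\pc(F)$. Choosing $h$ equal to the run budget leaves room for exactly one path through $F$, that is, a Hamiltonian path of the source.

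Your plan lacks both ingredients: the path-cover reformulation, and a padding gadget that absorbs all but one break while adding only constantly many vertices to the other side. Without them it does not yield a reduction.
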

\begin{proof}
To test for an odd-length directed walk from $x$ to $y$, replace each
vertex $v$ by $(v,0),(v,1)$ and each arc $uv$ by the arcs
$(u,0)(v,1)$ and $(u,1)(v,0)$. Every step flips the second coordinate,
so such a walk exists exactly when $(y,1)$ is reachable from $(x,0)$.
Ordinary graph search tests this in polynomial time. Length-two walks
are tested by checking for an intermediate vertex. Together these tests
construct $G_{S_0}$ in polynomial time, proving membership in NP on
arbitrary digraphs.

We reduce from Hamiltonian Path on connected undirected
graphs~\cite{GareyJohnson1979}. Let $F$ be the symmetric orientation of
a connected source graph: each edge gives both opposing arcs. A
\emph{directed path cover} partitions the vertices into directed paths,
where singleton paths are allowed; let $\pc(D)$ be the minimum number
of paths in such a cover of $D$. The source is a yes-instance precisely
when $\pc(F)=1$.

\paragraph*{Construction.}
Fix $r\in V(F)$ and an integer $h\geq2$, to be chosen below. Subdivide
every arc $uv\in A(F)$ by a distinct vertex $y_{uv}$, giving
$u\to y_{uv}\to v$. Add vertices $a,b,c,p,s$ and
$\ell_1,\ldots,\ell_h$, and add exactly the following further arcs:
\[
 \ell_i a,\ b\ell_i\ (1\leq i\leq h),\qquad
 ac,\ cb,\ cp,\ pr,\ rs,\ sc.
\]
Call the resulting digraph $G_h$. It is oriented: opposite source arcs use distinct subdivision
vertices, and none of the added gadget arcs has its reverse.
It is bipartite, with parts
\[
 X_h=V(F)\cup\{c,\ell_1,\ldots,\ell_h\},\qquad
 Y=\{y_{uv}:uv\in A(F)\}\cup\{a,b,p,s\}.
\]
It is strongly connected: the subdivided copy of $F$ is strongly
connected, it reaches $c$ through $r,s$, $c$ reaches it through $p,r$,
and each added leaf vertex is linked to $c$ through $a,b$.
Put $q=|Y|=|A(F)|+4$.

Figure~\ref{fig:infinite-directed} illustrates the construction around
$r$ when $r,u,v$ form a triangle in the source graph.

\begin{figure}[ht]
\centering
\begin{tikzpicture}[>=Stealth,font=\small,scale=.82,
 xvertex/.style={circle,draw,fill=blue!20,inner sep=2pt},
 yvertex/.style={circle,draw,fill=orange!35,inner sep=2pt}]
\node[xvertex] (r) at (0,0) {$r$};
\node[xvertex] (u) at (2,2) {$u$};
\node[xvertex] (v) at (4,0) {$v$};
\node[yvertex] (ru) at (.55,1.35) {$y_{ru}$};
\node[yvertex] (ur) at (1.45,.65) {$y_{ur}$};
\node[yvertex] (uv) at (3.45,1.35) {$y_{uv}$};
\node[yvertex] (vu) at (2.55,.65) {$y_{vu}$};
\node[yvertex] (rv) at (2,-.65) {$y_{rv}$};
\node[yvertex] (vr) at (2,-1.65) {$y_{vr}$};
\foreach \a/\b/\c in {r/ru/u,u/ur/r,u/uv/v,v/vu/u,r/rv/v,v/vr/r} {
 \draw[->] (\a)--(\b); \draw[->] (\b)--(\c);
}
\node[xvertex] (c) at (-3,0) {$c$};
\node[yvertex] (p) at (-1.5,.6) {$p$};
\node[yvertex] (s) at (-1.5,-.6) {$s$};
\node[yvertex] (a) at (-4.2,1.1) {$a$};
\node[yvertex] (b) at (-4.2,-1.1) {$b$};
\node[xvertex] (l1) at (-6,1.1) {$\ell_1$};
\node at (-6,0) {$\vdots$};
\node[xvertex] (lh) at (-6,-1.1) {$\ell_h$};
\foreach \i/\j in {c/p,p/r,r/s,s/c,a/c,c/b,l1/a,lh/a,b/l1,b/lh} {
 \draw[->] (\i)--(\j);
}
\node[xvertex,label=right:{$X_h$}] at (-3.5,-2.4) {};
\node[yvertex,label=right:{$Y$}] at (-1.5,-2.4) {};
\end{tikzpicture}
\caption{The construction for the infinite directed set $S_0$, with a
source triangle on $r,u,v$. Blue vertices belong to $X_h$ and orange
vertices to $Y$. Each source edge gives two opposing arcs, each subdivided
by its own orange vertex. The gadget attaches to the source only at $r$;
every $\ell_i$ has arcs to $a$ and from $b$. Only $\ell_1,\ell_h$ are drawn.}
\label{fig:infinite-directed}
\end{figure}

In $(G_h)_{S_0}$, all arcs in both directions between
$X_h$ and $Y$ are present: strong connectivity gives walks between
every ordered pair, and bipartiteness makes every such cross-part walk
odd. Within $X_h$, the only permitted length is two. Inspection of the
construction shows that $J_h=(G_h)_{S_0}[X_h]$ consists precisely of $F$, the
bidirected edges $c\ell_i$ for $1\leq i\leq h$, and the bidirected edge
$cr$.

\begin{claim}\label{cl:path-cover}
$\pc(J_h)=h-1+\pc(F)$.
\end{claim}
\begin{proof}
For the upper bound, use the path $\ell_1,c,\ell_2$, the other $h-2$
leaves as singleton paths, and a minimum path cover of $F$.
For the lower bound, consider the unique cover path containing $c$.
If it avoids $F$, it contains at most two leaves. Together with the
other singleton leaves, this gives at least $h-1$ paths disjoint from
$F$. If it meets $F$, it contains at most one leaf: reaching $F$ uses
one incidence at $c$ through $r$. At least $h-1$ leaves then remain
singleton paths. In either case, a cover path meeting $F$ contains a
single contiguous segment of $F$, because the only attachment is at
$r$ and a path cannot repeat $r$. Restricting the paths that meet $F$
therefore gives a path cover of $F$, and at least $\pc(F)$ such paths
are needed. Adding the disjoint paths proves the lower bound.
\end{proof}

\paragraph*{The three variants.}
Choose the value of $h$ as follows:
\begin{center}
\begin{tabular}{@{}lcc@{}}
\toprule
Target variant & $h$ & Maximum number of $X_h$-runs\\
\midrule
Cycle & $q$ & $q$\\
Path prescribed from $a$ to $b$ & $q-1$ & $q-1$\\
Unrestricted-endpoint path & $q+1$ & $q+1$\\
\bottomrule
\end{tabular}
\end{center}
Indeed, $q$ vertices outside $X_h$ separate at most $q$ cyclic runs,
at most $q-1$ runs of a path with both endpoints in $Y$, and at most
$q+1$ runs of an arbitrary path. These runs form a path cover of $J_h$.
Thus a yes-instance of the chosen target variant implies
$\pc(J_h)\leq h$, and Claim~\ref{cl:path-cover} yields $\pc(F)\leq1$.

Conversely, if $\pc(F)=1$, then $\pc(J_h)=h$. Let
$P_1,\ldots,P_h$ be a cover of this size. Every transition between an
$X_h$-path and a vertex of $Y$ is available in both directions.
For cycles, alternate the $q$ paths with all $q$ vertices of $Y$.
For prescribed paths, start at $a$, alternate the $q-1$ paths with the
$q-2$ vertices of $Y\setminus\{a,b\}$, and end at $b$.
For unrestricted paths, alternate the $q+1$ paths with the $q$ vertices
of $Y$, beginning and ending with a path. This proves the forward
implications. Each construction has linear size, establishing all
three many-one reductions.
\end{proof}

\subsection{Undecidability for a Recursive Length Set}

We construct a fixed recursive infinite set for which all three
directed problems are undecidable, even on strongly connected oriented graphs.

\begin{theorem}\label{thm:undecidable}
There is a fixed recursive infinite set $S^\star\subseteq\Npos$ for
which all three directed $S^\star$-Hamiltonian problems are undecidable,
even on strongly connected oriented graphs.
\end{theorem}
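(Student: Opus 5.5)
The plan is to use a computable many-one reduction from the halting problem. Since the reduction only needs to be computable, the graph built for a machine index $n$ may have size exponential in $n$. Both $S^\star$ and the gadget will be designed around one arithmetic invariant.

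\textbf{Step 1: fix a residue invariant in the gadget.} The gadget will be a strongly connected bipartite oriented graph $G_n$ in which every directed cycle has length divisible by a modulus $N_n$. Then all directed walk lengths from $x$ to $y$ form a single residue class $d(x,y)+N_n\mathbb N$, up to finitely many small exceptions. Concretely, I would take a directed cycle of length $N_n$, or a small thickening of it that makes it oriented and gives it the vertex counts I want. Every vertex then carries a potential modulo $N_n$.

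\textbf{Step 2: define $S^\star$.} I would put all of $O_\infty$ into $S^\star$, as in Theorem~\ref{thm:infinite-hard}. Bipartiteness and strong connectivity then make every cross-part pair compatible in both directions. The even part of $S^\star$ is chosen so that an even length $L$ in the residue class attached to index $n$ lies in $S^\star$ exactly when machine $n$ halts within a number of steps read off from $L$. A first try is $L=2^n(2k+1)$ with the condition ``machine $n$ halts within $k$ steps''. Membership in $S^\star$ is decided by simulating for $k$ steps, so $S^\star$ is recursive.

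\textbf{Step 3: translate halting into a Hamiltonicity question.} Compatible pairs inside one part exist exactly when the relevant residue class meets the even part of $S^\star$, that is, exactly when machine $n$ halts. I would make the parts unbalanced, with $|X|=|Y|+1$ for cycles; prescribed endpoints in $Y$ and $|X|=|Y|$ for prescribed paths; and $|X|=|Y|+2$ for unrestricted paths. This mirrors the run-counting table in the proof of Theorem~\ref{thm:infinite-hard}. Then a Hamiltonian cycle or path in $(G_n)_{S^\star}$ exists if and only if at least one $X$--$X$ compatibility arc exists: with such an arc, alternate between the parts using the complete cross-part arcs; without one, the compatibility digraph is bipartite with unbalanced parts, so no Hamiltonian cycle or path exists.

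\textbf{Main obstacle: spurious compatibilities.} The hard part is ruling out compatibilities that come from other indices. With the naive cycle-plus-extra-vertex gadget, pairs whose potential difference has 2-adic valuation $m\ne n$ could become compatible whenever machine $m$ halts. That would break the bipartite ``no'' case. I would first try to fix this through the encoding. One option is to assign each index $n$ its own odd prime $p_n$, take $N_n=2p_n$, and let the even part of $S^\star$ depend only on residues modulo $2p_n$, using lengths that are multiples of all smaller primes to separate indices. The other option is to rebuild the gadget so that only the two potential differences $0$ and $d_n$ occur within a part, by using long corridors whose internal vertices lie in the other part. In the second option, every same-part potential difference other than $d_n$ must fall in a residue class that $S^\star$ avoids unconditionally. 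Verifying this exclusion is the step I expect to need the most care. Once it holds, all three variants follow from the counting argument of Step~3.
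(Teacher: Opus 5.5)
\textbf{There is a genuine gap.} The step you flag as needing the most care is where the proof fails, and it cannot be closed inside your framework. Your counting in Step~3 forces the \emph{entire} part $X$ to be independent in $(G_n)_{S^\star}$ whenever $T_n$ does not halt.

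Suppose $G_n$ has a directed cycle $C$ of length $g$. Walk along $C$ for $L$ steps from an $X$-vertex of $C$, with $L$ even and $g\nmid L$. The walk ends at a \emph{different} vertex of $X$. So for every non-halting $n$, $S^\star$ may contain no even number outside $g\mathbb{N}$.

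This is exactly why your first try fails, and both rescue options fail for the same reason:
\begin{enumerate}
\item A cycle of length $N_n$ visits every residue modulo $N_n$. Hence every nonzero even potential difference occurs inside $X$, so ``only $0$ and $d_n$'' cannot hold.
\item A corridor in a bipartite graph alternates between the parts. Its internal vertices cannot all lie in the other part.
\end{enumerate}
With $N_n=2^{n+1}$ or $N_n=2p_n$, and infinitely many non-halting machines, $S^\star$ would have to contain no even number at all. Then the halting case fails too. Keeping all cycle lengths dividing a fixed $\Lambda$ does not help either. The $X$--$X$ walk-length sets would then be ultimately periodic modulo $\Lambda$, and whether they meet the fixed recursive set $S^\star$ would be decidable using finitely many hard-wired constants.

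\textbf{The missing idea.} Do the counting on a set whose members all have the \emph{same} potential, not on a whole colour class. Then only multiples of a single modulus matter. The paper uses the gadget $B_{L,k}$:
\begin{itemize}
\item an independent set $D=\{d_1,\ldots,d_k\}$;
\item a directed path $p_1\to\cdots\to p_{L-1}$;
\item all arcs $d_ip_1$ and $p_{L-1}d_i$.
\end{itemize}
Every walk between vertices of $D$ has length a positive multiple of $L$, and every such multiple occurs.

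If $1\in S^\star$ and $S^\star$ contains a multiple of $L$, the ordering $d_1,\ldots,d_k,p_1,\ldots,p_{L-1}$ works. Otherwise $D$ is independent in the compatibility digraph. Take $k=L$ for the cycle and for the path ending at $p_{L-1}$, and $k=L+1$ for the unrestricted path. Then too many vertices of $D$ need distinct successors among the $L-1$ vertices $p_j$. Compatibilities among the $p_j$ are irrelevant, which is exactly what removes your obstacle.

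Now take
\[
S^\star=\{1\}\cup\{2^r:r\geq1\}\cup\{2q_n^t:T_n\text{ halts within }t\text{ steps}\},
\]
with $q_n$ the $n$th odd prime, and set $L=q_n$. By unique factorisation, $S^\star$ meets $q_n\Npos$ exactly when $T_n$ halts.

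Your bipartite setting can be repaired the same way. Keep $O_\infty\subseteq S^\star$, take even part $\{2q_n^t:T_n\text{ halts within }t\text{ steps}\}$, and use $L=2q_n$. What must change is the counting target, not bipartiteness or $O_\infty$.
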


The last construction separates membership in a length set from the
ability to determine whether that set contains a feasible walk length.

\begin{lemma}[Divisibility gadget]\label{lem:divisibility}
Let $L\geq2$, and construct $B_{L,k}$ from an independent set
$D=\{d_1,\ldots,d_k\}$ and vertices $p_1,\ldots,p_{L-1}$ by adding
all arcs $d_i p_1$, all arcs $p_{L-1}d_i$, and the directed path
$p_1\to\cdots\to p_{L-1}$. For any fixed $S$ containing $1$, the
following are equivalent:
\begin{enumerate}[(i)]
\item $S$ contains a positive multiple of $L$;
\item $B_{L,L}$ has an $S$-Hamiltonian cycle;
\item $B_{L,L}$ has an $S$-Hamiltonian path from $d_1$ to $p_{L-1}$;
\item $B_{L,L+1}$ has an unrestricted $S$-Hamiltonian path.
\end{enumerate}
Each gadget is strongly connected and loopless, and has $O(L)$ vertices
and arcs for the two stated values of $k$.
\end{lemma}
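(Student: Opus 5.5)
The plan is to compute the relevant part of the compatibility digraph $(B_{L,k})_S$ with a residue argument, and then close the equivalences with a separator count in the spirit of Lemma~\ref{lem:block-count}.

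\textbf{Step 1: residues of walk lengths.} Assign position $0$ to every $d_i$ and position $j$ to $p_j$. Every arc of $B_{L,k}$ increases the position by one modulo $L$. Hence every walk from a vertex at position $i$ to one at position $j$ has length congruent to $j-i \pmod L$. In particular, every walk between two (possibly equal) vertices of $D$ has length a positive multiple of $L$.

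Conversely, for $i\neq j$ the walk $d_i\to p_1\to\cdots\to p_{L-1}\to d_j$ has length exactly $L$. Prepending $m-1$ closed walks $d_i\to p_1\to\cdots\to p_{L-1}\to d_i$ gives a $d_i$--$d_j$ walk of length $mL$ for every $m\geq1$. So $d_id_j$ is an arc of $(B_{L,k})_S$ if and only if $S$ contains a positive multiple of $L$.

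Since $1\in S$, all original arcs survive in the compatibility digraph: $d_ip_1$, $p_jp_{j+1}$, and $p_{L-1}d_i$. Other compatibilities between vertices of the path $p_1,\ldots,p_{L-1}$ may or may not be present, and the argument will not depend on them.

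\textbf{Step 2: (i) implies (ii)--(iv).} Use the $D$--$D$ arcs from Step 1 together with original arcs.
\begin{itemize}
\item For (ii), take the cyclic order $d_1,p_1,\ldots,p_{L-1},d_2,d_3,\ldots,d_L$. The transitions are $d_1p_1$, the path arcs, $p_{L-1}d_2$, the $D$--$D$ steps, and the closing step $d_Ld_1$. Since $|V(B_{L,L})|=2L-1\geq3$, this is a valid cycle.
\item For (iii), take $d_1,\ldots,d_L,p_1,\ldots,p_{L-1}$.
\item For (iv), take $d_1,\ldots,d_{L+1},p_1,\ldots,p_{L-1}$.
\end{itemize}

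\textbf{Step 3: the converse directions, which I expect to be the main point.} Suppose $S$ contains no positive multiple of $L$. By Step 1, $D$ is then independent in $(B_{L,k})_S$, and only $L-1$ vertices lie outside $D$.
\begin{itemize}
\item For a Hamiltonian cycle on $k=L$, the $L$ vertices of $D$ must be pairwise separated in the cyclic order. This needs at least $L$ distinct non-$D$ vertices, which is impossible.
\item For a path on $k=L$ ending at $p_{L-1}$, the $L-1$ gaps between consecutive $D$-vertices each need a distinct non-$D$ vertex. At least one further non-$D$ vertex (namely $p_{L-1}$ itself) must follow the last $D$-vertex. This again needs $L$ non-$D$ vertices.
\item For an unrestricted path on $k=L+1$, the $L$ gaps already need $L$ separators.
\end{itemize}
Each case contradicts the count of $L-1$ non-$D$ vertices, so (ii), (iii) and (iv) each imply (i). Equivalently, this is Lemma~\ref{lem:block-count} applied with $Q=V$ replaced by a direct count of runs of $D$. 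The only delicate point is noting that these counts need no information about compatibilities among the $p_j$.

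\textbf{Step 4: structural claims.} Strong connectivity follows because the closed walks $d_i\to p_1\to\cdots\to p_{L-1}\to d_j$ cover all vertices, and there are no loops. The digraph has $k+L-1$ vertices and $2k+L-2$ arcs, which is $O(L)$ for $k\in\{L,L+1\}$.
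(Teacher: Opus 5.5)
Your proof is correct and follows the paper's argument. Your mod-$L$ position function is a compact way of saying that any walk between vertices of $D$ must traverse the whole $p$-path, and your separator count is the same pigeonhole as the paper's ``$L$ vertices of $D$ need distinct successors among the $L-1$ outside vertices''; your cycle ordering differs from the paper's ($d_1,\ldots,d_L,p_1,\ldots,p_{L-1}$ closed at $d_1$) but is equally valid.
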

\begin{proof}
Every positive-length walk from one vertex of $D$ to another has length
a positive multiple of $L$, since it must traverse the entire $p$-path
between successive visits to $D$. Conversely, every positive multiple
of $L$ occurs between every ordered pair in $D$.

If (i) holds, all transitions between distinct vertices of $D$ are
allowed. Consider the ordering
\[
 d_1,\ldots,d_k,p_1,\ldots,p_{L-1}.
\]
The transitions
inside $D$ use an allowed multiple of $L$, and the remaining transitions
use length one. For $k=L$, the ordering gives the path in (iii), and
closing at $d_1$ gives the cycle in (ii). For $k=L+1$, it gives (iv).

If (i) fails, $D$ is independent in the compatibility digraph.
In a cycle with $k=L$, all $L$ vertices of $D$ need distinct successors
among only $L-1$ vertices outside $D$, which is impossible. The same
count applies to a path ending at $p_{L-1}$, since no vertex of $D$ is
the final vertex. For an unrestricted path with $k=L+1$, at most one
vertex of $D$ can be final, leaving at least $L$ vertices that again
need distinct successors among $L-1$ vertices. This excludes (ii)--(iv).
Strong connectivity, looplessness, and the size bounds are immediate
from the construction.
\end{proof}

\begin{proof}[Proof of Theorem~\ref{thm:undecidable}]
Fix an effective enumeration $T_1,T_2,\ldots$ of Turing machines run on
empty input, and let $q_n$ be the $n$th odd prime. Define
\begin{equation}\label{eq:recursive-set}
 \begin{split}
 S^\star={}&\{1\}\cup\{2^r:r\geq1\} \cup\{2q_n^t:n,t\geq1\text{ and }T_n
                          \text{ halts within }t\text{ steps}\}.
 \end{split}
\end{equation}
This set is infinite because it contains all positive powers of two.
It is recursive: test whether an input is $1$ or a power of two;
otherwise factor it. The only remaining possible form is $2p^t$ with
$p$ an odd prime and $t\geq1$. Compute the index $n$ of $p$ among the
odd primes and simulate $T_n$ for $t$ steps. This is a terminating
membership test; no polynomial-time bound is asserted or needed.

Unique prime factorisation gives
\begin{equation}\label{eq:halting-multiples}
 S^\star\cap q_n\Npos\ne\varnothing
 \quad\Longleftrightarrow\quad T_n\text{ halts}.
\end{equation}
Indeed, among the integers in~\eqref{eq:recursive-set}, only numbers
of the form $2q_n^t$ are divisible by $q_n$, and such a number is included
exactly when a sufficiently long finite simulation witnesses halting.

Given $n$, compute $q_n$ and apply Lemma~\ref{lem:divisibility} with
$L=q_n$. Equation~\eqref{eq:halting-multiples} reduces the halting problem
to the cycle and prescribed-endpoint problems on $B_{q_n,q_n}$, and to
the unrestricted-endpoint problem on $B_{q_n,q_n+1}$. Since $L=q_n\geq3$, the vertices $p_1$ and $p_{L-1}$ are distinct,
and the gadgets contain no antiparallel arcs. These are therefore
computable many-one reductions to finite strongly connected
oriented graphs.
The undecidability of halting~\cite{Turing1936} proves the theorem.
\end{proof}

\section*{Declaration of AI use}
A preliminary version of Theorem~\ref{thm:directed-finite} was obtained
manually using a construction with loops. GPT~6 Pro was used to extend
this result to oriented graphs and to obtain all other new results
in this paper. The authors take full responsibility for the content
of the paper.

\bibliography{bibliography}

@article{Gould2003,
  author  = {Gould, Ronald J.},
  title   = {Advances on the {Hamiltonian} Problem---A Survey},
  journal = {Graphs and Combinatorics},
  volume  = {19},
  number  = {1},
  pages   = {7--52},
  year    = {2003},
  doi     = {10.1007/s00373-002-0492-x}
}

@book{ApplegateEtAl2006,
  author    = {Applegate, David L. and Bixby, Robert E. and
               Chv{\'a}tal, Va{\v{s}}ek and Cook, William J.},
  title     = {The Traveling Salesman Problem: A Computational Study},
  series    = {Princeton Series in Applied Mathematics},
  publisher = {Princeton University Press},
  year      = {2006},
  url       = {https://www.jstor.org/stable/j.ctt7s8xg}
}

@inproceedings{ALM,
  author    = {Amarilli, Antoine and Lombardo, Arthur and Monet, Mika{\"e}l},
  title     = {The {$S$}-Hamiltonian Cycle Problem},
  booktitle = {52nd International Workshop on Graph-Theoretic Concepts in
               Computer Science (WG 2026)},
  editor    = {Goedgebeur, Jan and Rz\k{a}\.{z}ewski, Pawe{\l}},
  series    = {Leibniz International Proceedings in Informatics (LIPIcs)},
  volume    = {376},
  pages     = {5:1--5:15},
  publisher = {Schloss Dagstuhl -- Leibniz-Zentrum f{\"u}r Informatik},
  address   = {Dagstuhl, Germany},
  year      = {2026},
  doi       = {10.4230/LIPIcs.WG.2026.5}
}

@misc{ALMfull,
  author        = {Antoine Amarilli and Arthur Lombardo and Mika{\"e}l Monet},
  title         = {The {$S$}-Hamiltonian Cycle Problem},
  year          = {2026},
  month         = may,
  howpublished  = {arXiv:2602.16532v2},
  eprint        = {2602.16532},
  archivePrefix = {arXiv},
  primaryClass  = {cs.DS},
  doi           = {10.48550/arXiv.2602.16532},
  url           = {https://arxiv.org/abs/2602.16532v2},
  note          = {Full version, version 2}
}

@book{GareyJohnson1979,
  author    = {Garey, Michael R. and Johnson, David S.},
  title     = {Computers and Intractability: A Guide to the Theory of
               {NP}-Completeness},
  publisher = {W. H. Freeman},
  address   = {San Francisco},
  year      = {1979},
  isbn      = {978-0-7167-1044-8}
}

@article{GJT76,
  author  = {Garey, Michael R. and Johnson, David S. and Tarjan, Robert Endre},
  title   = {The Planar {Hamiltonian} Circuit Problem is {NP}-Complete},
  journal = {SIAM Journal on Computing},
  volume  = {5},
  number  = {4},
  pages   = {704--714},
  year    = {1976},
  doi     = {10.1137/0205049}
}

@article{IPS82,
  author  = {Itai, Alon and Papadimitriou, Christos H. and
             Szwarcfiter, Jayme Luiz},
  title   = {{Hamilton} Paths in Grid Graphs},
  journal = {SIAM Journal on Computing},
  volume  = {11},
  number  = {4},
  pages   = {676--686},
  year    = {1982},
  doi     = {10.1137/0211056}
}

@article{Karaganis,
  author  = {Karaganis, Jerome J.},
  title   = {On the Cube of a Graph},
  journal = {Canadian Mathematical Bulletin},
  volume  = {11},
  number  = {2},
  pages   = {295--296},
  year    = {1968},
  doi     = {10.4153/CMB-1968-037-0}
}

@incollection{Karp1972,
  author    = {Karp, Richard M.},
  title     = {Reducibility among Combinatorial Problems},
  booktitle = {Complexity of Computer Computations},
  editor    = {Miller, Raymond E. and Thatcher, James W. and Bohlinger, Jean D.},
  publisher = {Plenum Press},
  address   = {New York},
  pages     = {85--103},
  year      = {1972},
  doi       = {10.1007/978-1-4684-2001-2_9}
}

@article{Turing1936,
  author  = {Turing, Alan M.},
  title   = {On Computable Numbers, with an Application to the
             {Entscheidungsproblem}},
  journal = {Proceedings of the London Mathematical Society},
  volume  = {s2-42},
  number  = {1},
  pages   = {230--265},
  year    = {1937},
  doi     = {10.1112/plms/s2-42.1.230}
}

@article{Underground,
  author  = {Underground, P.},
  title   = {On graphs with {Hamiltonian} squares},
  journal = {Discrete Mathematics},
  volume  = {21},
  number  = {3},
  pages   = {323},
  year    = {1978},
  doi     = {10.1016/0012-365X(78)90164-4}
}
\end{document}